\providecommand{\tabularnewline}{\\}
\renewcommand\normalsize{%
   \@setfontsize\normalsize{13pt}{14.5pt}%
   \abovedisplayskip 12\p@ \@plus3\p@ \@minus7\p@
   \abovedisplayshortskip \z@ \@plus3\p@
   \belowdisplayshortskip 6.5\p@ \@plus3.5\p@ \@minus3\p@
   \belowdisplayskip \abovedisplayskip
   \let\@listi\@listI}\normalsize 
\renewcommand{\labelenumi}{(\arabic{enumi})}
\begin{document}

\title{AntPaP: Patrolling and Fair Partitioning of Graphs by A(ge)nts Leaving
Pheromone Traces}

\author{Gidi Elazar and Alfred M. Bruckstein}

\institute{Multi Agent Robotic Systems (MARS) Lab \\
Technion Autonomous Systems Program (TASP) \\
Center for Intelligent Systems (CIS)\\
Department of Computer Science\\
 Technion, Haifa 32000, Israel}
\maketitle
\begin{abstract}
A team of identical and oblivious ant-like agents -- a(ge)nts -- leaving
pheromone traces, are programmed to jointly patrol an area modeled
as a graph. They perform this task using simple local interactions,
while also achieving the important byproduct of partitioning the graph
into roughly equal-sized disjoint sub-graphs. Each a(ge)nt begins
to operate at an arbitrary initial location, and throughout its work
does not acquire any information on either the shape or size of the
graph, or the number or whereabouts of other a(ge)nts. Graph partitioning
occurs spontaneously, as each of the a(ge)nts patrols and expands
its own pheromone-marked sub-graph, or region. This graph partitioning
algorithm is inspired by molecules hitting the borders of air filled
elastic balloons: an a(ge)nt that hits a border edge from the interior
of its region more frequently than an external a(ge)nt hits the same
edge from an adjacent vertex in the neighboring region, may conquer
that adjacent vertex, expanding its region at the expense of the neighbor.
Since the rule of patrolling a region ensures that each vertex is
visited with a frequency inversely proportional to the size of the
region, in terms of vertex count, a smaller region will effectively
exert higher ``pressure'' at its borders, and conquer adjacent vertices
from a larger region, thereby increasing the smaller region and shrinking
the larger. The algorithm, therefore, tends to equalize the sizes
of the regions patrolled, resembling a set of perfectly elastic physical
balloons, confined to a closed volume and filled with an equal amount
of air. The pheromone based local interactions of agents eventually
cause the system to evolve into a partition that is close to balanced
rather quickly, and if the graph and the number of a(ge)nts remain
unchanged, it is guaranteed that the system settles into a stable
and balanced partition.
\end{abstract}

\section{Introduction}

Patrolling is continuously traveling through an environment in order
to supervise or guard it. Although mostly used to refer to humans
guarding an area, the term patrolling is also used to describe surveying
through a digital, virtual environment. Consider, for example, the
task of repeatedly reading web pages from the world-wide-web in order
to keep an updated database representing the links between pages,
possibly for the purpose of later retrieval of pages in an a accurate
and prompt manner. These problems exhibit similarities, in the sense
that they can be represented as traveling through the vertices of
a graph. But there are also differences: a physical area is usually
fixed in size, whereas the virtual area is, in general, prone to constant
change. The number of human guards is, generally, fixed for the particular
area being patrolled, while the number of software agents or ``bots''
performing a large scale patrolling task may be subject to change
as well. 

Partitioning a graph into similar sized components is an important
and difficult task in many areas of science and engineering. To name
few examples, we can mention the partitioning of a netlist of an electronic
VLSI design \cite{VLSI}, the need for clustering in data mining \cite{Clustering Survey},
and the design of systems that balance the load on computer resources
in a networked environment \cite{Load Balancing}. 

The general graph partition problem is loosely defined as dividing
a graph into disjoint, connected components, such that the components
are \textit{similar }to each other in some sense. Practical considerations
impose additional constrains. For example, an important problem, known
as the \textit{graph k-cut}, requires a partition where the sum of
the \textit{weights }of vertices belonging to each component\textit{
}is more or less equal, and additionally, the number and/or the sum
of weights of edges that \textit{connect disjoint components} is minimized
\cite{k-cut}. The \textit{k-cut} problem can model the distribution
of tasks between computers on a network, while minimizing communication
requirements between them. 

In this work we define a patrolling strategy that fairly divides the
work of patrolling the environment among several a(ge)nts by partitioning
it into regions of more or less the same size. We have no constraints
on edges connecting different components, but we impose strict restrictions
on our patrolling agents in search for a heuristic multi-agent graph
partitioning algorithm that may continuously run in the background
of a host application. We are interested in programming the same behavior
for each individual ant-like agent, which should be very simple in
terms of resources, hardware or communications. Furthermore the agents
have no id's, hence are part of a team of units that are anonymous
and indistinguishable to each other. Our a(ge)nts should have very
little knowledge about the system or environment they operate in,
have no awareness on the size or shape of the graph, no internal memory
to accumulate information, nor a sense of the number and locations
of other agents active in the system. These limitations mean that
such a multi agent process has inherent scalability; the environment
might be large, complex, and subject to changes, in terms of vertices,
edges and even the number of agents, and our simple agents should
still be able to patrol it, while also evolving towards, and ultimately
finding balanced partitions, if such partitions exist. To simplify
the discussion, we will think of a graph to be partitioned as a planar
area, and the task at hand will be to partition the area into regions
of more or less the same size. The area is modeled as a grid, where
each vertex is a unit area, thus a balanced partition should have
components of roughly the same number of vertices. In our scheme,
agents are each given the task to \textit{patrol} and define a region
of their own, and have the ability to expand their region via conquests.
Like ants, our agents leave pheromone marks on their paths. The marks
decay with time and are subsequently used as cues by all the agents
to make decisions about their patrolling route and about the possibility
to expand their region. By assumption, each agent operates \textit{locally},
thus it can sense levels of pheromones or leave pheromone marks on
the vertex it is located, on its edges and on adjacent vertices. 

While patrolling its region, an agent visits a vertex and reads the
intensity of pheromone marks that remain from previous visits. It
then uses the reading, and the known rate of pheromone decay, to calculate
the vertex's idle-time -- the time that passed since the previous
visit. Using the decaying pheromone mark we can chose a patrolling
rule according to which the agents visit the vertices of their region
in repetitive cycles, each vertex being marked with a pheromone once
on each cycle. The patrolling process hence ensures that the idle-time
measured by agents on visits to their region's vertices is the same,
effectively encoding their region's \textit{cover time }-- the time
that takes for an a(ge)nt to complete a full patrolling cycle -- and
therefore it can also be used to estimate the region's size: the shorter
the cover time, the smaller the region. 

We assume that each agent detects pheromones without being able to
distinguish between them, except for recognizing its own pheromone.
When an a(ge)nt hits a border edge -- an edge that connects its region
with one that is patrolled by another agent - it can use the neighbor's
idle-time (encoded in its pheromone marks) to calculate the size of
the neighboring region, and thereby decide whether to try to conquer
the vertex ``on the other side of the border''. This causes an effect
that mimics pressure equalization between gas-filled balloons: at
two vertices on opposing ends of a border edge, the agent that hits
the border more frequently is the agent with a shorter cover time
(patrolling the smaller region) hence it may attempt a conquest. 

We define that in a balanced partition, any pair of neighboring regions
have a size difference of at most one vertex. This means that for
a graph $\mathcal{G}$ and $k$ agents, our partitioning heuristics
ensures a worst case difference of $k-1$ vertices between the largest
and smallest of the regions, once a balanced partition is reached.
For example in a graph of 1 million vertices (e.g. database entries,
each representing a web page) and 10 agents (network bots patrolling
the pages), this difference is truly negligible. Additionally, the
length of the patrolling path is predetermined, and is proportional
to the size of the region being patrolled, therefore when a balanced
partition is reached, the algorithm guarantees that the idleness of
any of the vertices of the graph is bounded by a number of steps equal
to $2\left(\left\lfloor \dfrac{\left|\mathcal{G}\right|}{k}\right\rfloor +\left(k-1\right)\right)-1$,
about twice the size of the largest possible region (Note that $\left|\mathcal{G}\right|$denotes
the number of vertices in the graph $\mathcal{G}$).

In Figure \ref{fig:evolution intro} one can see a series of snapshots
depicting 8 patrolling agents working to partition a 50x50 grid. The
first snapshot shows an early phase of the joint patrolling algorithm,
where agents already captured some of the vertices around their initial
random locations, in the second, the area is almost covered and most
of the vertices of the graph are being patrolled, the third exhibits
a phase when all the area is covered but the regions are not balanced,
and finally, the last snapshot shows a balanced partition that the
system evolved into. 

\begin{figure}[h]
\includegraphics[scale=0.3]{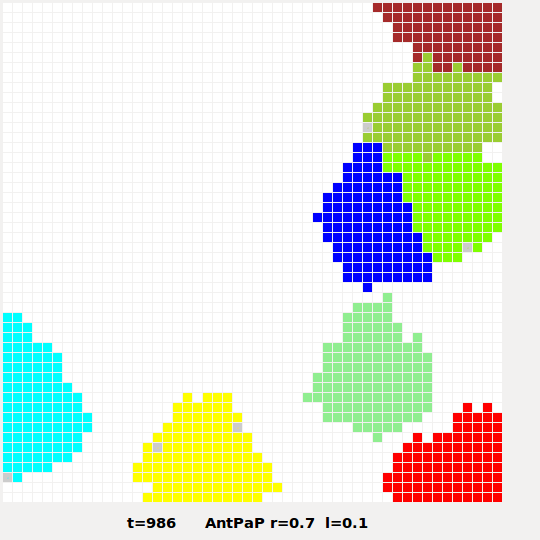}\includegraphics[scale=0.3]{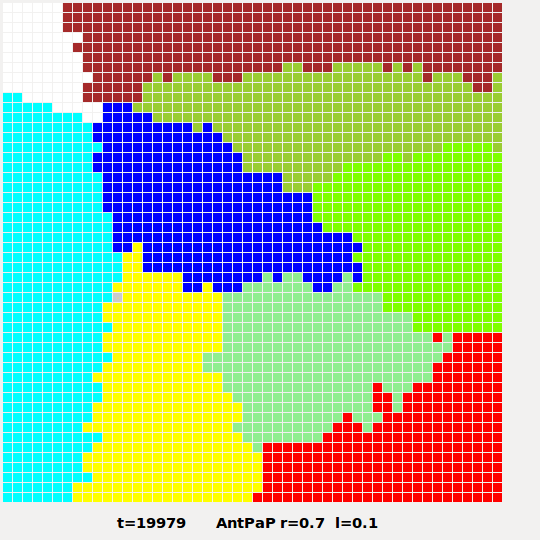}\bigskip{}
\includegraphics[scale=0.3]{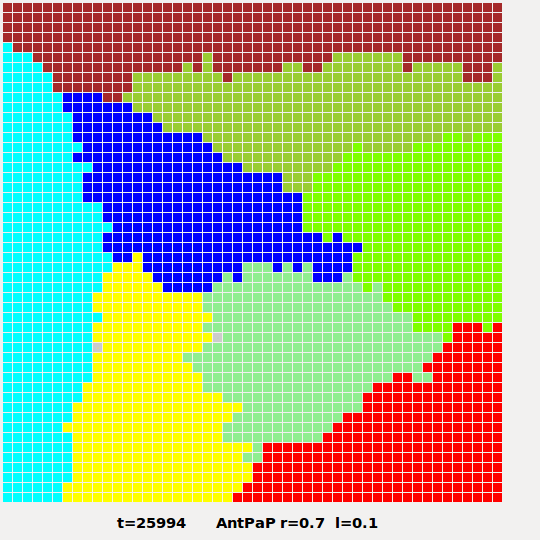}\includegraphics[scale=0.3]{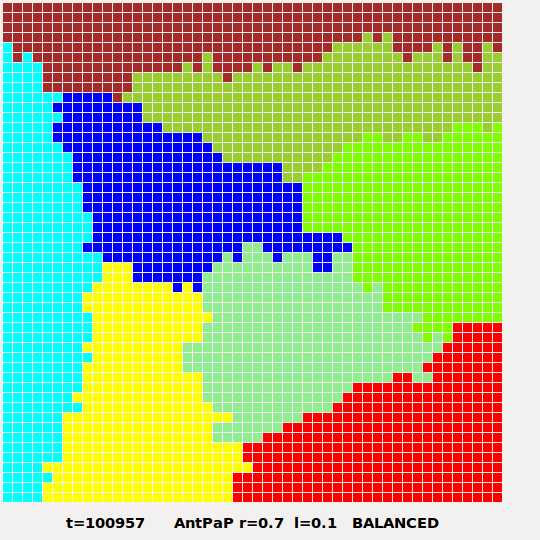}\caption{\label{fig:evolution intro} Evolution of a 50x50 Grid Graph Partitioning
by 8 A(ge)nts}
\end{figure}

This Figure exhibits typical stages in the evolution of such a system,
for which balanced partitions exist, and the environment graph remains
stationary for a time long enough for agents to find one of them.
Often, the agents will relatively quickly find a partition that covers
the graph, and is \textit{close} to being balanced. Then, on stationary
graphs, they may spend a rather long time to reach a perfectly balanced
partition. In a time varying environment the system will continuously
adapt to the changing conditions.

\section{Related Work}

The concept of partitioning a graph with a(ge)nts patrolling a region
and exerting pressure on neighboring regions was first presented by
Elor and Bruckstein in \cite{BDFS}. They proposed a patrol algorithm
named BDFS --Balloon DFS -- and this work is a follow up research
on this problem. According to BDFS, an agent patroling a smaller region
conquers vertices from a neighboring larger region. To achieve the
goal of patrolling an area, BDFS uses a variation of Multi-Level Depth-First-Search
(MLDFS), an algorithm presented by Wagner, Lindenbaum and Bruckstein
in \cite{MLDFS}. The task of the MLDFS too, was to distribute the
work of covering ``tiles on a floor'' among several identical agents.
The floor-plan mapping of the tiles is unknown and may even be changing,
an allegory for moving furniture around while agents are busy cleaning
the floor. MLDFS implements a generalization on DFS: agents leave
decaying pheromone marks on their paths as they advance in the graph,
and then use them either to move to the vertex least recently visited
or to backtrack. When none of the choices are possible, either when
the graph covering ends, or following to changes in the graph or loss
of tracks due to noise in the pheromone marks, agents \textit{reset,
}thus starting a new search. The time of reset, named ``the time
where new history begins'', is stored in the agents' memory, as a
\textbf{search-level} variable. After a \textit{reset, }the cycle
repeats, hence an agent traces pheromone marks left in an earlier
cycle. The mere existence of a pheromone mark is, however, not sufficient
for agents to choose a path not yet taken during the current search
cycle. To select the next step, agents use the value stored in the
\textbf{search-level} variable as a threshold: any pheromone that
was marked on a vertex or edge prior to this time must have been the
result of marking in an earlier cycle. In MLDFS, pheromones of all
agents are the same, and agents are allowed to step on each other's
paths. For the task of partition a graph, in BDFS each agent has its
own pheromone and it performs MLDFS cycles on its ``own'' region
of the graph, leaving its particular pheromone marks. As long as the
region is stationary, BDFS agents exactly repeat their previous route.
If the region changes, either expands or shrinks, it will cause BDFS
to look for a new and possibly substantially different route before
settling into the next search cycle. This occurs due to a subtlety
in the way that depth-first-search defines a spanning tree, a special
type of tree called a \textit{palm tree}, were each edge $\left(v,w\right)$
\uline{not} in the spanning tree connects a vertex $v$ with one
of its ancestors, see e.g. Tarjan \cite{DFS Tarjan Spanning Tree}.
The spanning tree defined during a BDFS search cycle does not consider
all edges emanating from \textit{all} of the region's vertices, simply
because some of the edges connect to vertices \textit{on neighboring
}regions\textit{.} When BDFS conquers a vertex, it is possible that
this vertex has more than one edge connecting to the region. All these
edges will now be considered during the next search cycle, a process
that may dictate a different palm tree. We call this event - \textit{respanning}.
In the algorithm we define here, named \textit{Ant Patrolling and
Partitioning, or AntPaP}, we use a different generalization of DFS
that avoids respanning. Furthermore we reduce the requirements on
the agent's capabilities. For example, our agents have no memory,
and also cannot control the levels of pheromones they leave, the pheromone
level at the time of a marking is always the same. We further add
the possibility for agents to \textit{lose} a vertex if a conquest
fails, and we provide a proof of convergence to balanced partitions,
while experimentally observing much faster evolution towards such
partitions. 

The subject of multi-agent patrolling has been extensively studied.
Lauri and Charpillet \cite{ACO applied to parolling} also use an
``ants paradigm'', where a method based on Ant Colony Optimization
(ACO), introduced by Dorigo, Maniezzo, and Colorni in \cite{ACO Ant System}.
ACO provides multi-agent solutions for various problems, for example
the Traveling Salesman Problem (TSP) in complete and weighted graphs
by a so-called \textit{ant-cycle} algorithm. Ants move to the next
vertex according to a probability that is biased by two parameters:
the closest neighbor vertex (corresponding to the lowest edge weight)
and the level of pheromone on the edge. During their search, ants
record their path to avoid visiting the same vertex twice. Since at
each step all ants traverse one edge to a neighboring vertex, all
ants complete their travel at the same time. Thereafter each ant leaves
pheromone marks on the entire path it took. Due to the probability
bias, shorter edges have a higher probability to be traversed, thus
it is probable that multiple ants traversed them, hence they tend
to accumulate stronger pheromone levels. The cycles repeat, and with
each cycle the biasing gets stronger towards the shortest path. The
process ends after a a-priori given number of cycles complete or when
the ants all agree on the shortest path. For the patrolling problem,
Lauri et. al. \cite{ACO applied to parolling} used this method to
find multiple paths, one for each agent, \textit{before} the agents
begin their joint work. Their algorithm employs multiple colonies
of ants where each agent is assigned one ant on each colony. Ants
in a colony cooperate (exchange information regarding their choices)
to divide the exploration into disjoint paths , leading the agents
to eventually cooperate in the patrolling task. Unlike for the TSP,
the environment graph is not required to be complete, and ants are
allowed to visit a vertex more than once when searching for a patrolling
route. 

Chevaleyre, Yann, Sempe, and Ramalho \cite{multi-agent patrolling strategies}
compared cyclic patrolling strategies, in which agents tend to follow
each other, to partitioning strategies, in which agents patrol each
its own region. By applying several algorithms on several graphs examples,
they found that the choice of strategy should be based on the shape
of the graph. The partitioning based strategy gets better results
on ``graphs having 'long corridors' connecting sub-graphs'', i.e.
if there are high weight edges that are slow to traverse, it is better
not traverse them at all by allocating them to connect disjoint partitions.

There is substantial research on heuristics for partitioning of a
graph, and some of it even related to multi-agent scenarios. Inspired
by ants, Comellas and Sapena \cite{aNTS} presented yet another \textit{Ants
}algorithm to find a \textit{k-cut} solution to a graph. The system
is initiated by randomly coloring all the graph vertices in a more
or less even number of colors and positioning the agents randomly
on the graph. Then a \textit{local cost} value is calculated for each
of the vertices, storing the percentage of neighbors that have the
same color as its own. Agents will then iteratively move to a neighboring
vertex $v$ that has the lowest cost (i.e. with the most neighbors
of a \textit{different }color than its own), and then switch colors
with a random vertex $u$ on the graph, where the color of $u$ is
the one most suitable for $v$, i.e. similar to most of $v$'s neighbors.
$u$ is selected from from a random list of vertices colored with
the same color as $u$, by choosing from this list the one with lowest
cost (most neighbors colored \textit{differently} than $u$). Then
the cost value is refreshed for both $v$ and $u$. On each iteration
the number of cuts, defined as the number of edges connecting vertices
of different colors, is calculated over all the edges of the graph
and the lowest value is stored. The choices of agent moves are stochastic,
i.e. agents have a probability $p$ to select the next vertex to move
to by using the cost value, otherwise it selects another neighbor
vertex at random. This allows the system to escape from local minima.
Unlike our algorithm, agents of Comellas and Sapena\textit{'s Ants
}aim to find a \textit{k-cut}, and while doing so do not leave pheromones
to be used as cues on vertices and edges they visit as our agents
do. Also, their\textit{ }agents are assumed to have the ability to
look at vertices that are anywhere in the graph and change their values,
thus their sensing is not local as in our algorithm. In \textit{Ants},
each iteration relies on a global calculation that involves access
to values on \textit{all} edges of the graph, in order to measure
the quality of the partition so far determined, as well as storing
the result. 

Inspired by bee foraging, McCaffrey \cite{bee colony} simulates a
bee colony in order to find a \textit{k-cut} graph partition. Each
of the agents, in this case called \textit{bees}, is assumed to know
in advance the size and shape of the graph, as well as the number
of components desired. The agent must have an internal memory to store
an array of vectors listing the vertices of all sub-graphs of a proposed
solution, as well as the number of cuts this partition has, as a measure
of its quality. In a hive, some 10\% of the bees are considered \textit{scouts,
}all other agents being in one of two states, \textit{active} or \textit{inactive.
}Emulated scouts select a random partition of the graph. If the selection
is better than what the scout previously found, it stores it in its
memory and communicates it to other bees in the hive that are in an
\textit{inactive} state. Some of those store the scout's solution
in their own memory, change their state to \textit{active} and begin
to search for a better partition around this solution. If an active
bee finds an improved solution it communicates it to the bees that
are left in the hive. After looking at neighboring solutions for a
long enough time, the \textit{active} bee returns to the hive and
becomes \textit{inactive} again. The algorithm, therefore, is constantly
searching for improvements in the quality of the partition that the
bees collectively determine. 

The partitioning and patrolling multi-agent algorithms that we have
surveyed above, all assume that agents posses substantial internal
memory. Some algorithms assume that the agents are able to sense and
even change values of vertices and/or edges in graph locations that
are distant from their position in the graph, and sometimes they can
even sense and/or store a representation of the whole graph in their
memory. Patrolling algorithms may be partition based, and then the
task is divided into two stages. In the first stage the graph is partitioned
into disjoint components, and at the second stage each of the agents
patrols one of those components. 

In our case, partitioning the graph, and thereby balancing the workload
among our agents, is a requirement. Our algorithm does not have stages,
the agents simply perform pheromone directed local steps thereby carrying
out a \textit{patrolling} algorithm, and while doing so also implicitly
cooperate in partitioning the graph. Our agents have no internal memory
at all. Their decisions are based on pheromone readings from vertices
and edges alone, and they can only sense or leave pheromone marks
around their graph location. One may view our solution for patrolling
and partitioning the graph environment as using the graph as a shared
distributed memory for our oblivious agents.

\pagebreak{}

\section{The AntPaP Algorithm and Empirical Results}

The task analyzed here is the partition of an area or environment
into regions of similar size by a set of agents with severe restrictions
on their capabilities. The inspiration for the algorithm are gas filled
balloons; consider a set of elastic balloons located inside a box,
and being inflated at a constant and equal rate, until the balloons
occupy the entire volume of the box. While inflating, it may be that
one balloon disturbs the expansion of another balloon. This may cause
a momentarily difference of the pressure in the balloons, until the
pressure difference is large enough to displace the disturbing balloon
and provide space for the expansion of the other. Since the amount
of gas is equal for all balloons, they will each occupy the same part
of the volume, effectively partitioning it into equal parts. Our agents
mimic this behavior by patrolling a region of the area ``of their
own'', while continuously aiming to expand it. The area is modeled
by a graph and the region is a connected component of the graph. When
expanding regions touch, the agent on the smaller region may conquer
vertices of the larger region. We assume that initially a given number
of agents are randomly placed in the environment, they start the process
of expanding and this process goes on forever. Eventually the expansion
is ``contained'' due to the interaction between the regions of the
agent, hence the process will lead to an equalization of the sizes
of the regions patrolled by the agents. In the discrete world of our
agents a partition to regions of exactly the same size may not exist,
therefore we define a balanced partition as such that any two neighboring
regions may have a size difference of at most one vertex. 

\medskip{}

\newpage{}

\noindent \textbf{\uline{Agent Modeling and Implementation Details\label{Agents model}}}

For simplicity, a(ge)nts operate in time slots, in a strongly asynchronous
mode, i.e. within a time interval every agent operates at some random
time, so that they do not interfere with each other. During a given
time slot, each agent may move over an edge to another vertex, and
may leave pheromone marks on a vertex and/or edges. The marks, if
made, are assumed to erase or coexist with the pheromone that remained
there from the previous visit. Agents have no control over the amount
of pheromone they leave, its initialization level being always the
same. Thereafter, the pheromone level decays in time. Each agent has
its own pheromone, thus pheromones are like colors identifying the
disjoint components and hence the partitioning of the graph. The agents
themselves can only tell if a pheromone is their own or not. Agents
are oblivious, i.e. have no internal memory. On each time slot, an
agent reads remaining pheromone levels previously marked on the vertex
it is located and its surroundings, and bases its decisions \uline{solely}
upon these readings. The readings and decisions are transient, in
the sense that they are forgotten when the time slot advances. Decaying
pheromone marks on vertices and edges linger, serving both as distributed
memory as well as means of communication. In our model, agents leave
pheromones in two patterns: one pheromone pattern is marked when agents
advance in their patrolling route, and the second pattern is used
when agents decide to remain on the same vertex. Pheromones are decaying
in time, thus once marked on a vertex or edge, their level on the
vertex or edge decreases with each time step. A straightforward way
for implementing such behavior in a computer program, is to use the
equivalent ``time markings'', i.e. stamping the \textit{time} at
which a pheromone is marked on the vertex or edge. We therefore denote
by $\varphi_{0}\left(v\right)$ the time of pheromone marking on vertex
$v$, hence $\varphi_{0}\left(v\right)=\tilde{t}$ means that an agent
left a pheromone on vertex $v$ at time $\tilde{t}$. As time advances,
the ''age'' of the pheromone on vertex $v$, i.e. the time interval
since it was marked, which can be calculated as $t-\varphi_{0}\left(v\right)$
where $t$ is the current time, advances as well. This is equivalent
to measuring the level of the temporally decaying pheromone on vertex
$v$, and using its value along with the known rate of decay to calculate
its ``age''. Similarly, $\varphi\left(u,v\right)=t$ is a time marking,
equivalent to the decaying pheromone level on the edge, where $\varphi\left(u,v\right)$
and $\varphi\left(v,u\right)$ are not necessarily the same. The use
of time markings require the computer program implementation to know
the current time $t$ in order to be able to calculate the age of
pheromones. However, the knowledge of current time is strictly limited
to its use in the emulation of temporally decaying pheromones by equivalent
time markings, thus it does not depart from our paradigm of obliviousness
and local decisions based on decaying pheromone markings only. When
an agent decides to leave a pheromone mark on a vertex, it may avoid
erasing the pheromone that remains from the previous (most recent)
visit. We denote the previous time marking as $\varphi_{1}\left(v\right)$,
thus when an agent marks a pheromone on vertex $v$, the computer
program implementation moves the value stored in $\varphi_{0}\left(v\right)$
to $\varphi_{1}\left(v\right)$ and afterwards sets the new time mark
to $\varphi_{0}\left(v\right)$. Hence, the value, $\varphi_{0}\left(v\right)-\varphi_{1}\left(v\right)$
encodes the \textit{idle time} of the vertex $v$. 

\noindent \textbf{\uline{The Patrol Algorithm}}

Agents patrol their region in a DFS-like route, in the sense that
they advance into each vertex once and backtrack through the same
edge once during a complete traversal of their region. When an agent
completes traversing the region it resets (i.e. it stays at the same
location for one time step and refreshes its pheromone mark), and
subsequently starts the search again. The cycles repeat the same route
as long as the region is unchanged. When the region does change --
either expanding or shrinking -- out agents persist on keeping advancing
and backtracking into a vertex through the same edge that was used
to conquer the vertex. This is implemented by marking \textit{``pair
trails'',} i.e. leaving pheromones over edges as well as vertices,
when conquering and (subsequently) advancing into a vertex. A pair
trail is a directed mark from a vertex $u$ to an adjacent vertex
$v$, of the form $\varphi_{0}\left(v\right)=\varphi\left(u,v\right)=\varphi\left(v,u\right)$,
and is one of the two pheromone patterns that agents leave. This behavior
results in a patrolling process that \textit{follows the pair trails},
were agents advance through the earliest marked pair trail, refreshing
the marks while doing so. When all pair trails to advance through
are exhausted, it backtracks through the same pair trail it entered.
An example of a route and the spanning tree it defines are depicted
in Figure \ref{fig:patrolling route and tree}. The departure from
the classic DFS is that edges that are not marked as pair trails are
ignored. The pair trails mark a \textit{spanning tree} (which is not
necessarily a palm tree) of the region, where its root is the vertex
where the search cycle begins, and each pair trail marks the path
advancing upwards the tree. When an agent backtracks to the root,
it has no untraveled pair trail to advance through, and it restarts
the search cycle remaining one time slot in the root. It then uses
the second marking pattern which is simply leaving a pheromone on
the vertex, denoted as $\varphi_{0}\left(u\right)=t$, where $u$
is the root.

\begin{figure}[h]
\makebox[1\columnwidth]{%
\includegraphics[scale=0.5]{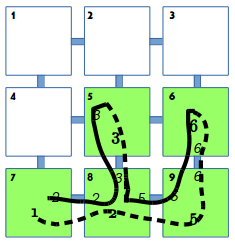}\hspace{4em}\includegraphics[scale=0.5]{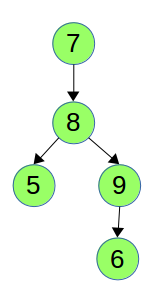}%
}

\caption{\label{fig:patrolling route and tree}An example of a patrolling route
in a region, and the spanning tree it defines. The arrows indicate
\textit{pair trail }directions: advance in the direction of the arrow,
and (eventually) backtrack the other way}
\end{figure}

Since agents advance and backtrack once from each vertex in their
region (except the root) and then restart a patrolling cycle in the
root, the number of steps in one patrolling cycle, called the \textit{cover
time}, is $\Delta t_{\alpha}=2\left|\mathcal{G}_{\alpha}\right|-1$.
$\mathcal{G}_{\alpha}$ denotes the region of agent $\alpha$, the
set of vertices that are part of $\alpha's$ patrolling cycle. Patrolling
cycles repeat the exact same route as long as the region remains unchanged,
hence the \textit{idle time} of any vertex $v$ of the region is also
the region's cover time $\varphi_{0}\left(v\right)-\varphi_{1}\left(v\right)=\Delta t_{\alpha}$.
Thus the pheromone markings on the vertex can be used to calculate
the size of the region $\left|\mathcal{G}_{\alpha}\right|=\dfrac{\Delta t_{\alpha}+1}{2}$.

\noindent \textbf{\uline{Conquest}}

To expand their region, agents may conquer vertices adjacent to (vertices
of) their region. For agent $\alpha$ to attempt to launch a conquest
from a vertex $u\in\mathcal{G}_{\alpha}$ to a target vertex $v$,
the following conditions must apply:
\begin{enumerate}
\item $v$ is not part of $\alpha$'s region, $v\notin\mathcal{G}_{\alpha}$,
let us then assume that $v\in\mathcal{G}_{\beta}$ of another agent
$\beta$ 
\item $u$ is subject to a \textit{double visit} by $\alpha$, i.e. $\alpha$
visits $u$ leaving pheromone marks \textit{twice}, while $v$ was
not visited even once by $\beta$ during the same period of time.
Since the time difference between the two visits by $\alpha$ is the
\textit{cover time} and the cover time is proportional to the size
of the region, it means that $\alpha$'s region is smaller than $\beta$'s,
$\left|\mathcal{G}_{\alpha}\right|<\left|\mathcal{G}_{\beta}\right|$.
An agent may check for this condition by evaluating if $\varphi_{1}\left(u\right)>\varphi_{0}\left(v\right)$. 
\item If the double visit condition is met, thus $\beta$'s region is larger,
allow a conquest attempt if it is not larger by \textit{exactly one}
vertex -- since a difference of one vertex is considered balanced. 
\item If the double visit condition is met and $\beta$'s region is larger
by \textit{exactly one} vertex, allow a conquest attempt if vertex
$v$ is stagnated -- it's pheromones are older than their purported
cover time. An agent checks this by comparing the idle time $t-\varphi_{0}\left(v\right)$
to the cover time $\varphi_{0}\left(v\right)-\varphi_{1}\left(v\right)$.
\end{enumerate}
Depending on the above conditions, an agent may stochastically \textit{attempt}
the conquest of vertex $v$, with a predefined probability $\mbox{0<\ensuremath{\rho_{c}}<1}$.
This mechanism works even if $v$ is not part of any of the other
agent's regions, $v\notin\mathcal{G}_{i},\ \forall i$. In such case
the pheromone marks on $v$ will never be refreshed and the conquest
conditions hold.

\noindent \textbf{\uline{\label{Temporary-Inconsistency}Temporary
Inconsistency}}

When a region expands or shrinks as result of conquests, its becomes
\textit{inconsistent }in the sense that the size of the region changed,
but at least some of the pheromone marks on its vertices encoding
the \textit{cover time} $\left(\varphi_{0}\left(v\right)-\varphi_{1}\left(v\right)\right)$
do not reflect that immediately. To regain consistency on a vertex
$u$, the pheromone marks on $u$ must be \textit{refreshed}, hence
an agent must leave there a fresh pheromone, and that may occur only
when the agent advances into vertex $u$ through a pair trail. Therefore,
there is a delay in the propagation of the change, thus there will
be a temporary inconsistency between the actual size of the region
and the cover time encoded on region's vertices. That inconsistency
is certainly not desirable since it might result in a miscalculation
of conquest conditions. Consider an agent $\alpha$ with a larger
region than two of its neighbors $\beta$ and $\gamma$. Both neighbors
will be attempting to conquer vertices from $\alpha$. Since all agents'
awareness is local, $\beta$ and $\gamma$ have no means to know that
$\alpha$ is shrinking due to the work of the other as well, and as
$\beta$ and $\gamma$ repeatedly conquer vertices from $\alpha$,
the combined conquests may accumulate to ``eat up'' too much out
of $\alpha$'s region up to a point where the imbalance is reversed,
and the areas of both $\beta$ and $\gamma$ are now larger than $\alpha$'s.
Nonetheless, the inconsistency is temporary. It is convenient to analyze
this issue by considering the spanning tree of pair trails. When an
agent conquers a vertex and expands its region, it results in adding
a leaf to the spanned tree, and losing a vertex to another agent results
in the \textit{pruning} of the tree, the splitting of the tree into
two or more branches, while the losing agent remains on one of them.
In either case, the \textit{follow the pair trails} strategy ensures
that the new route remains well defined. It is therefore sufficient
for an agent to patrol its region twice, to ensure that the region
is consistent, as described in the following Lemma:
\begin{lemma}
\label{lem:A-region-is-consistent}A region is consistent if it has
remained unchanged for a period of time which is twice its cover time.\end{lemma}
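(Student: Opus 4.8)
The plan is to track, for each vertex of the region, the moment at which its two most recent pheromone time stamps $\varphi_0$ and $\varphi_1$ both come to reflect the post-change patrolling route, since consistency on a vertex $v$ means precisely that $\varphi_0\left(v\right)-\varphi_1\left(v\right)$ equals the (new) cover time $\Delta t_{\alpha}=2\left|\mathcal{G}_{\alpha}\right|-1$. Let $t_0$ denote the last time the region changed, so that from $t_0$ onward the spanning tree of pair trails is fixed. The starting observation, already established above, is that once the region is unchanged the agent repeats one and the same route and refreshes each non-root vertex exactly once per patrolling cycle (the root being refreshed once by the second marking pattern), with the cycle spanning exactly $\Delta t_{\alpha}$ time steps. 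Hence the refresh events of any fixed vertex $v$ form a periodic sequence of period $\Delta t_{\alpha}$, and every half-open time window of length $\Delta t_{\alpha}$ contains exactly one refresh of $v$.

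First I would argue that a vertex $v$ is \emph{not} yet consistent after only its first refresh following $t_0$: at that instant $\varphi_0\left(v\right)$ holds a correct post-change stamp, but $\varphi_1\left(v\right)$ still holds the value written before the region changed, so their difference encodes the stale interval rather than $\Delta t_{\alpha}$. After the second post-$t_0$ refresh, however, the follow-the-pair-trails bookkeeping shifts the first (correct) stamp into $\varphi_1\left(v\right)$ and writes a fresh $\varphi_0\left(v\right)$ exactly one cover time later, giving $\varphi_0\left(v\right)-\varphi_1\left(v\right)=\Delta t_{\alpha}$; from then on $v$ is consistent. Combining this with the periodicity observation, $v$ receives its first refresh within $[t_0,\,t_0+\Delta t_{\alpha})$ and its second within $[t_0+\Delta t_{\alpha},\,t_0+2\Delta t_{\alpha})$, so $v$ is consistent by time $t_0+2\Delta t_{\alpha}$. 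As this bound is uniform over all vertices of $\mathcal{G}_{\alpha}$, the whole region is consistent once it has remained unchanged for $2\Delta t_{\alpha}$ time steps, i.e.\ twice its cover time. A worst-case vertex whose first refresh falls just before $t_0+\Delta t_{\alpha}$ shows that the bound of two full cycles cannot in general be improved, matching the text's remark that patrolling the region twice suffices.

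The step I expect to be the main obstacle is justifying the clean periodicity from the exact moment of stabilization, since at $t_0$ the agent may be caught in the middle of a traversal and the marks left over from the change may be momentarily out of order. The point to nail down is that the follow-the-pair-trails rule nonetheless yields a well-defined closed DFS walk on the now-fixed tree that advances into, and backtracks from, every vertex exactly once per cycle regardless of the agent's phase at $t_0$; this is what guarantees both the period $\Delta t_{\alpha}$ and the one-refresh-per-window property on which the counting rests. Everything else reduces to the elementary two-cycle bookkeeping on $\varphi_0$ and $\varphi_1$ described above.
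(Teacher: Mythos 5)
Your proof is correct and takes essentially the same route as the paper's: the paper's proof is exactly the two-cycle bookkeeping you describe --- the first post-change cycle writes a correct $\varphi_{0}$ on every vertex (while $\varphi_{1}$ may still be stale), and the second cycle, repeating the identical route, shifts that stamp into $\varphi_{1}$ and writes a fresh $\varphi_{0}$, so $\varphi_{0}-\varphi_{1}=\Delta t_{\alpha}$ holds on all vertices. Your extra care about the agent's phase at the moment of stabilization and the per-window refresh count only makes explicit what the paper assumes when it asserts that cycles repeat the exact same route on an unchanged region.
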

\begin{proof}
On the first cycle, the agent leaves a fresh pheromone, $\varphi_{0}$,
on each vertex, while the previous most recently visit, $\varphi_{1}$,
may reflect an inconsistent state. The second cycle repeats the exact
same route as the first, since the region remains unchanged, and now
both the most recent visit as well as the one preceding it, indicated
by pheromone levels $\varphi_{0}$ and $\varphi_{1}$, are updated,
thus $\varphi_{0}-\varphi_{1}$ reflects the cover time on all vertices
of the route and the region becomes consistent.\qed
\end{proof}
\noindent \textbf{\uline{Losing a Vertex}}

When balloons are inflated in a box, to the observer it looks as a
smooth evolution where the balloons steadily grow and occupy more
of the volume until the box is filled. But unlike gas inside a balloon
that exerts pressure in all directions concurrently, our discrete
agents work in steps, where at each step they attend one vertex of
their region, while the other vertices may be subject to conquests
by other agents. Since regions are defined by patrolling routes, an
agent $\alpha$, by conquest of a single vertex from $\beta$, may
prune $\beta$'s region in a way that leaves $\beta$ to patrol a
much smaller region effectively rendering it smaller than $\alpha$'s.
Now the ``balance tilts'', as the region that was larger prior to
the conquest becomes the smaller. Pruning may cut a spanning tree
into two or more sections, but in many cases the sections of the tree
may still be \textit{connected} by edges that are \textit{not }marked
by\textit{ }a pair trail. In such case, $\beta$ has an opportunity
to mark a pair trail over such an edge and regain access to a branch
still marked with its own pheromones. Yet, sometimes the pruning divides
the region into two unconnected components. We call these \textit{balloon
explosions, }and when these occur it is more difficult for the agent
that lost part of its region to regain its loss. Therefore, when an
agent launches a conquest attempt it is not always clear if its success
will advance or set-back the evolution towards convergence. It is
then natural to add the following \textit{vertex loss} rule: should
an agent fail the conquest attempt, there is a predefined probability
$0<\rho_{l}<1$ for \textit{losing} the vertex from which the attempt
was launched. \textit{Losing} the vertex may indeed be a better evolution
step than succeeding in that conquest, resembling actions of withdrawal
from local minima used in \textit{simulated annealing}. In fact, this
property becomes instrumental in our convergence proof for the AntPaP
algorithm. In order to prevent an agent from ``cutting the branch
it is sitting on'', we limit vertex loss events to steps of the patrolling
process in which agents backtrack, and, symmetrically restrict conquests
to steps in which the agents advance. 

We next list the algorithm describing the work of each agent on the
graph environment.

\begin{adjustwidth*}{-2cm}{-2cm}

\begin{singlespace}
\begin{tabular}{>{\raggedright}p{9cm}>{\raggedright}p{7cm}}
\begin{singlespace}
\textbf{\uline{Rule AntPaP}}\end{singlespace}
 & \tabularnewline
\textit{\footnotesize{}Entry point of an agent $\alpha$ at time step
$t$. }{\footnotesize \par}

\textit{\footnotesize{}Upon entry, the agent is located on vertex
$u$.}{\footnotesize \par}

{\footnotesize{}\smallskip{}
} & \tabularnewline
\uline{Pseudo Code}

{\footnotesize{}\smallskip{}
} & \uline{Description}\tabularnewline
\textbf{\footnotesize{}for each }{\footnotesize{}$\exists v\in\mathcal{N}\left(u\right)|\ \left(\mathcal{S}elf\left(v\right)=false\right)\wedge\left(\varphi_{1}\left(u\right)>\varphi_{0}\left(v\right)\right)$}{\footnotesize \par}

{\footnotesize{}\textSFxi{}}{\footnotesize \par}

{\footnotesize{}\textSFxi{}}{\footnotesize \par}

{\footnotesize{}\textSFxi{}}\textbf{\footnotesize{} if}{\footnotesize{}
$\left(\varphi_{0}\left(u\right)-\varphi_{1}\left(u\right)\right)+2\ne\left(\varphi_{0}\left(v\right)-\varphi_{1}\left(v\right)\right)$}{\footnotesize \par}

{\footnotesize{}\textSFxi{}}\textbf{\footnotesize{} $\ $ or}{\footnotesize{}
$\left(t-\varphi_{1}\left(v\right)\right)>\left(\varphi_{0}\left(v\right)-\varphi_{1}\left(v\right)\right)$
}\textbf{\footnotesize{}then}{\footnotesize \par}

{\footnotesize{}\textSFxi{} \textSFxi{}}{\footnotesize \par}

{\footnotesize{}\textSFxi{} \textSFxi{} }\textbf{\footnotesize{}if
}{\footnotesize{}$\nexists v\in\mathcal{N}\left(u\right)|\varphi\left(u,v\right)=\varphi_{0}\left(u\right)+1$
}\textbf{\footnotesize{}then}{\footnotesize \par}

{\footnotesize{}\textSFxi{} \textSFxi{} \textSFxi{} }\textbf{\footnotesize{}if
$\left(\mathcal{A}gent\mathcal{P}resent\left(v\right)=false\right)\wedge\left(x<\rho_{c}\right)$
then}{\footnotesize \par}

{\footnotesize{}\textSFxi{} \textSFxi{} \textSFxi{} \textSFxi{}}{\footnotesize \par}

{\footnotesize{}\textSFxi{} \textSFxi{} \textSFxi{} \textSFxi{} $\varphi\left(u,v\right)=t$;
$\varphi\left(v,u\right)=t$; $\varphi_{0}\left(v\right)=t$ }{\footnotesize \par}

{\footnotesize{}\textSFxi{} \textSFxi{} \textSFxi{} \textSFxi{} $\varphi_{1}\left(v\right)=0$ }{\footnotesize \par}

{\footnotesize{}\textSFxi{} \textSFxi{} \textSFxi{} \textSFii{}}\textbf{\footnotesize{}
goto}{\footnotesize{} $v$ ; }\textbf{\footnotesize{}return}{\footnotesize{} }{\footnotesize \par}

{\footnotesize{}\textSFxi{} \textSFxi{} \textSFxi{} $lose=\left(y<\rho_{l}\right)$ }{\footnotesize \par}

{\footnotesize{}\textSFii{} \textSFii{} \textSFii{}}\textbf{\footnotesize{}
else }{\footnotesize{}$lose=\left(y<\rho_{l}\right)$ }{\footnotesize \par}

{\footnotesize{}\bigskip{}
}{\footnotesize \par}

\textbf{\footnotesize{}if}{\footnotesize{} $\text{ }\exists v=\underset{v\in\mathcal{N}\left(u\right)}{argmin}\left\{ \varphi_{0}\left(v\right),s.t.\varphi_{1}\left(u\right)>\varphi_{0}\left(v\right),\ \mathcal{S}elf\left(v\right)=true\right\} $}\textbf{\footnotesize{}
then}{\footnotesize \par}

{\footnotesize{}\textSFxi{} $\varphi\left(u,v\right)=t$ ; $\varphi\left(v,u\right)=t$
; $\varphi_{0}\left(v\right)=t$ ; $\varphi_{1}\left(v\right)=0$}{\footnotesize \par}

{\footnotesize{}\textSFii{}}\textbf{\footnotesize{} goto}{\footnotesize{}
$v$ ; }\textbf{\footnotesize{}return}{\footnotesize{} }{\footnotesize \par}

{\footnotesize{}\bigskip{}
}{\footnotesize \par}

\textbf{\footnotesize{}if}{\footnotesize{} $\text{ }\exists v=\underset{v\in\mathcal{N}\left(u\right)}{argmin}\left\{ \varphi\left(u,v\right),s.t.\ \varphi\left(u,v\right)\ \textmd{in a pair trail}\right\} $}\textbf{\footnotesize{}
then}{\footnotesize \par}

{\footnotesize{}\textSFxi{} }{\footnotesize \par}

{\footnotesize{}\textSFxi{} }\textbf{\footnotesize{}if }{\footnotesize{}$\left(\varphi\left(u,v\right)=\varphi\left(v,u\right)=\varphi_{0}\left(v\right)\right)\wedge\left(\varphi_{0}\left(u\right)>\varphi_{0}\left(v\right)\right)$
}\textbf{\footnotesize{}then}{\footnotesize \par}

{\footnotesize{}\textSFxi{} \textSFxi{} $\varphi_{1}\left(v\right)=\varphi_{0}\left(v\right)$}{\footnotesize \par}

{\footnotesize{}\textSFxi{} \textSFxi{} $\varphi\left(u,v\right)=t$
; $\varphi\left(v,u\right)=t$ ; $\varphi_{0}\left(v\right)=t$ }{\footnotesize \par}

{\footnotesize{}\textSFxi{} \textSFii{} }\textbf{\footnotesize{}goto}{\footnotesize{}
$v$; }\textbf{\footnotesize{}return}{\footnotesize{} }{\footnotesize \par}

{\footnotesize{}\textSFxi{} }{\footnotesize \par}

{\footnotesize{}\textSFxi{} }\textbf{\footnotesize{}if}{\footnotesize{}
$\left(\varphi\left(u,v\right)=\varphi\left(v,u\right)=\varphi\left(u\right)\right)$
}\textbf{\footnotesize{}then }{\footnotesize \par}

{\footnotesize{}\textSFxi{} \textSFxi{}}\textbf{\footnotesize{} if
}{\footnotesize{}$lose$ }\textbf{\footnotesize{}then}{\footnotesize{} }{\footnotesize \par}

{\footnotesize{}\textSFxi{} \textSFxi{}}\textbf{\footnotesize{} }{\footnotesize{}\textSFii{}
$\varphi_{0}\left(u\right),\varphi_{1}\left(u\right)=0,\forall w\in\mathcal{N}\left(u\right),\varphi\left(u,w\right),\varphi\left(w,u\right)=0$}{\footnotesize \par}

{\footnotesize{}\textSFii{} \textSFii{} }\textbf{\footnotesize{}goto}{\footnotesize{}
$v$; }\textbf{\footnotesize{}return}{\footnotesize{} }{\footnotesize \par}

{\footnotesize{}\bigskip{}
}{\footnotesize \par}

{\footnotesize{}\bigskip{}
}{\footnotesize \par}

{\footnotesize{}$\varphi_{1}\left(v\right)=\varphi_{0}\left(v\right);\varphi_{0}\left(u\right)=t$}\\
\textbf{\footnotesize{}return} & \textbf{\footnotesize{}\uline{Explore border}}{\footnotesize{} }{\footnotesize \par}

{\footnotesize{}for each neighbor $v$ of $u$, }{\footnotesize{}\uline{not}}{\footnotesize{}
marked by $\alpha$'s pheromone, and meeting the}\textit{\footnotesize{}
double hit.}{\footnotesize \par}

{\footnotesize{}if size difference$\ne1$, }\\
{\footnotesize{}or $v$ is stagnated (pruned or empty) then,}{\footnotesize \par}

{\footnotesize{}\smallskip{}
}{\footnotesize \par}

{\footnotesize{}\quad{}if not backtracked to $u$}{\footnotesize \par}

{\footnotesize{}\quad{}\quad{}if chance allows conquest (x random
$\in\left[0,1\right]$)}\\
\textbf{\footnotesize{}\uline{Conquer vertex $v$}}{\footnotesize{} }{\footnotesize \par}

{\footnotesize{}mark pair-trail into $v$}{\footnotesize \par}

{\footnotesize{}move to $v$ and exit this step}{\footnotesize \par}

{\footnotesize{}\bigskip{}
}{\footnotesize \par}

{\footnotesize{}\smallskip{}
if no conquest, set }\textit{\footnotesize{}lose}{\footnotesize{}
flag (y random $\in\left[0,1\right]$)}{\footnotesize \par}

{\footnotesize{}\bigskip{}
}{\footnotesize \par}

\textbf{\footnotesize{}\uline{Rejoin Isolated}}{\footnotesize \par}

{\footnotesize{}if $v$ has $\alpha's$ (self) mark but }\textit{\footnotesize{}double
visit}{\footnotesize{} is met, rejoin $v$. Move to $v$ and exit
this step}{\footnotesize \par}

{\footnotesize{}\bigskip{}
}{\footnotesize \par}

{\footnotesize{}\bigskip{}
}{\footnotesize \par}

{\footnotesize{}\bigskip{}
}{\footnotesize \par}

{\footnotesize{}\medskip{}
}{\footnotesize \par}

{\footnotesize{}select $v$ of the oldest pair-trail}\\
{\footnotesize{}\medskip{}
}{\footnotesize \par}

\textbf{\footnotesize{}\uline{Advance}}{\footnotesize \par}

{\footnotesize{}if pair-trail points into $v$, and mark on $v$ is
older}{\footnotesize \par}

{\footnotesize{}keep previous time-mark}{\footnotesize \par}

{\footnotesize{}refresh the pair-trail pointing to $v$}{\footnotesize \par}

{\footnotesize{}move to $v$ and exit this step}{\footnotesize \par}

{\footnotesize{}\medskip{}
}{\footnotesize \par}

\textbf{\footnotesize{}\uline{Backtrack}}{\footnotesize \par}

{\footnotesize{}if pair-trail points into $u$ }{\footnotesize \par}

{\footnotesize{}\quad{}}\textbf{\footnotesize{}\uline{Lose}}{\footnotesize \par}

{\footnotesize{}\quad{}if }\textit{\footnotesize{}lose}{\footnotesize{}
flag is set}{\footnotesize \par}

{\footnotesize{}\quad{}remove time marks of $u$ and its pair-trails}{\footnotesize \par}

{\footnotesize{}move to $v$ and exit this step}{\footnotesize \par}

{\footnotesize{}\medskip{}
}{\footnotesize \par}

\textbf{\footnotesize{}\uline{Reset}}{\footnotesize \par}

{\footnotesize{}refresh vertex $u$}\\
{\footnotesize{}exit this step}\tabularnewline
\end{tabular}
\end{singlespace}

\end{adjustwidth*}

\pagebreak{}

\section{Typical Evolution of Patrolling and Partitioning}

During a patrolling cycle, an agent attempts conquests over all the
border edges of all vertices of its region. Hence its region may expand
with additional vertices bordering its route. This causes the spanning
tree defined by its DFS-route to have an ever growing number of branches
as the patrolling cycles continue, resulting in a tree shape resembling
a ``snow flake'' .

Additionally, agents have a strong tendency to form ``rounded''
regions, if the environment and other agents' regions allow it. This
happens because vertices that are candidates for conquest and are
adjacent to more than one of the region's vertices, have a higher
probability to be conquered and incorporated into the patrolled region.

\begin{figure}[h]
\makebox[1\columnwidth]{%
\includegraphics[scale=0.25]{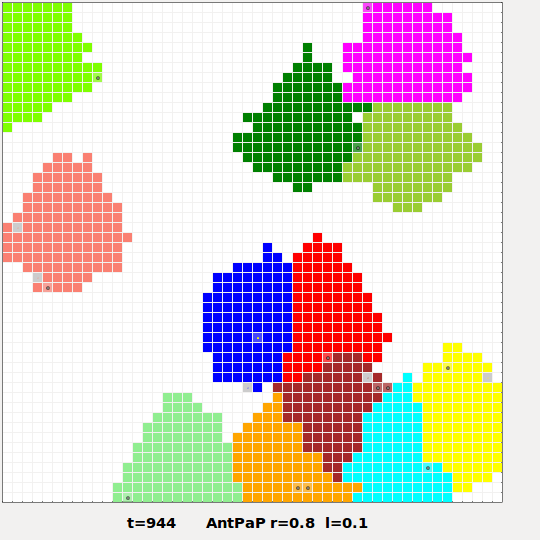}\includegraphics[scale=0.25]{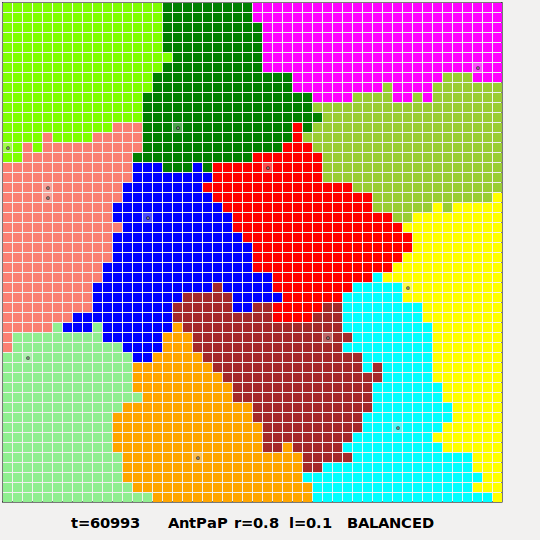}%
}

\caption{\label{fig:plausible rounded and thick}Rounded and Thick Regions
build-up by 12 ants on a 50x50 grid}
\end{figure}

These two properties, the snow-flake like spanning tree and rounded
build-ups, often assist in achieving a smooth evolution towards convergence.
Thick regions make the possibility of major ``balloon explosions''
unlikely. The thiner branches at the ends of the snow-flake-like region,
cause the pruning of the region by another agent to merely ``shave''
off small fragments from the region, and are less likely to cut out
a large portion of its vertices. Moreover, if a greater portion was
cut out by pruning, it is highly likely that the lost portion is connected
by edges that were not pair trails, making it is easier for the losing
agent to regain its vertices.

Figure \ref{fig:chart region size vs time} depicts a chart showing
a typically observed evolution of 12 regions in a 50x50 square grid.
The chart describes how the sizes of the regions change with time.
The images of Figure \ref{fig:plausible rounded and thick} are two
snapshots taken during the same evolution, the second snapshot being
of the balanced partition that the agents reached. The chart shows
that some regions grew faster than others, then at some point, the
regions grew enough so that the graph was covered (or close to being
covered due to continuous pruning), and the smaller regions began
to grow at the expense of the larger regions until they all reach
a very similar size, albeit not yet balanced. For practical purposes
arriving to this state in the graph might suffice, especially if the
graph is constantly changing and a partition that is balanced is not
well defined. This process occurs quickly, then there is a much longer
phase towards convergence. The ``turbulence'' seen at about $t=30,000$
are due to some mild ``balloon explosions'' followed by recovery
and convergence. The last \textit{plateau} towards convergence is
short in this particular example, but in some other simulated examples
it appeared much longer. This is especially noticeable in cases which
the evolution results in the system having two large regions that
are adjacent and have very similar sizes but are not yet balanced,
e.g. a size difference of 2 vertices. Such scenarios may increase
the time required for a double visit, which is a condition for conquest.

\begin{figure}[h]
\makebox[1\columnwidth]{%
\includegraphics[scale=0.5]{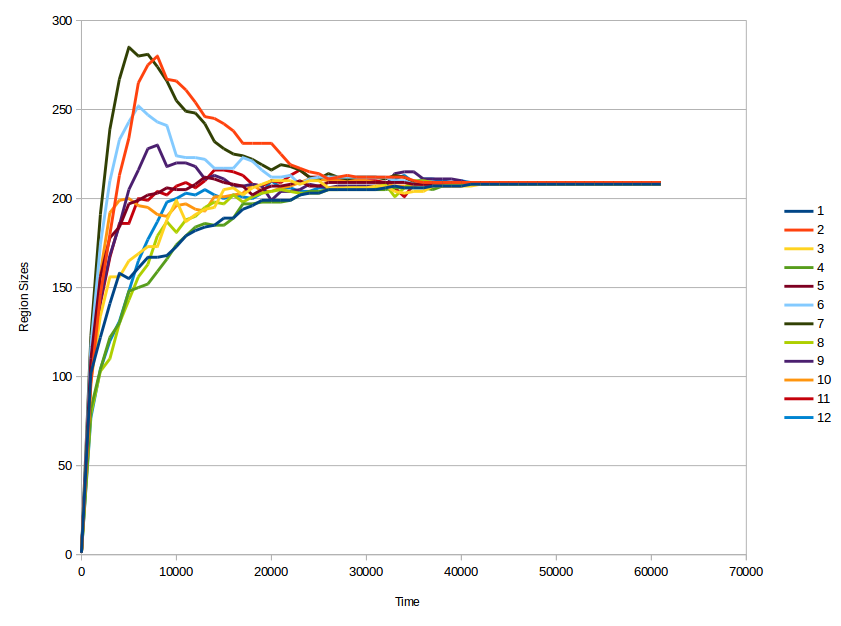}%
}

\caption{\textbf{\label{fig:chart region size vs time}}A Typically Observed
Partition Evolution , 12 agents on a 50x50 grid}
\end{figure}

Figure \ref{fig:convergence time as func of graph size} is a chart
depicting time to convergence for a system with 5 agents as a function
of the graph size, overlaying results of multiple simulation runs.
It shows that the spread in the time to convergence grows with size,
but it also clear that the majority, depicted as dense occurrences,
are not highly spread in value, indicative of runs of ``typical evolution
scenarios'' on square grid, as was the environment tested in these
simulations. The dotted line is an interpolation of average convergence
time among the results achieved for each graph size.

\begin{figure}[h]
\makebox[1\columnwidth]{%
\includegraphics[scale=0.6]{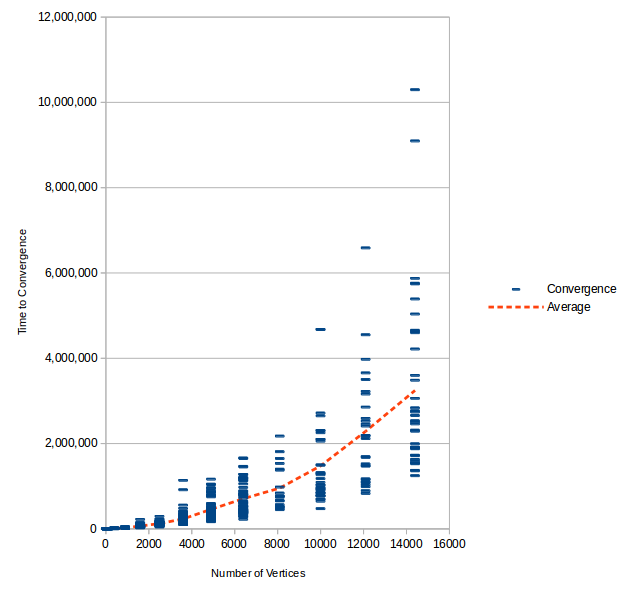}%
}

\caption{\label{fig:convergence time as func of graph size}Time to Convergence
as a Function of Graph Size}
\end{figure}

\section{Proof of Convergence}

The above discussed experimental evidence showcases an evolution of
the system towards a balanced partition (when the topology of the
environment/graph remains stationary), an evolution which is smooth
without ``dramatic'' incidents, driven by the AntPaP algorithm mimicking
pressure equalization. However, AntPaP is a heuristic process, and
the experimentally observed smooth convergence is by no means guaranteed.
In the evolution towards balanced partitions there are various events
that may substantially alter the size difference between regions,
and lead the system to longer and chaotic excursions. Chance dictates
the way regions expand and, for example, a region may build up with
less thickness in some areas allowing other agents to cut across it
causing ``major balloon explosions''. Furthermore, even a quite
well-rounded region may be subject to an ``unfair'' probabilistic
attack, driven to cut through its width and eventually succeeding
to remove a large portion of its area. To make things even worse,
the portion of the split region that ceased to be patrolled, becomes
a readily available prey to neighboring agents. Therefore, although
the system is relentlessly progressing towards a balanced partition
due to the rules of ``pressure equalization'', such ``balloon explosions''
are singular events that may significantly derail the smooth evolution
towards convergence, slowing the process considerably. One may wonder
if there are conditions were these events occur repeatedly, making
the convergence into a balanced partition an elusive target, that
may even never be reached. 

Clearly, there are systems where a balanced partition cannot be reached,
simply because one does not exist. An evident example is a graph having
the shape of a star. Consider a graph of 7 vertices, one at the center
and three branches of two vertices each. A system with 7 vertices
and 2 agents should be partitioned into two connected components,
one of 3 vertices and the other of 4. But such partition does not
exist, thus repeated balloon explosions will forever occur. Interestingly,
a balanced partition does exist for the same graph with 3 agents.

Therefore, our first step towards proof is to precisely define the
systems of interest, which are based on environment-graphs for which
a balanced partitions exist for \textit{any} number of agents (up
to a bound). The most general set of such graphs is an interesting
question in itself. For our purposes, we shall limit our analysis
to systems of the following type: 
\begin{enumerate}
\item The environment is a graph $\mathcal{G}$ that has a Hamiltonian path,
a path that passes through every vertex in a graph exactly once, with
any number of agents $n\le\left|\mathcal{G}\right|$ patrolling it.
Indeed, for any graph $\mathcal{G}$ that has a Hamiltonian path,
we can find multiple possibilities for a balanced partition for any
number of agents $n\le\left|\mathcal{G}\right|$. To name one, the
partition where each region includes vertices that are all adjacent
one to another along the Hamiltonian path, and the regions are chained
one after another along the path. Some of the regions can be of size
$\left\lfloor \dfrac{\left|\mathcal{G}\right|}{n}\right\rfloor $
and the others of size $\left\lfloor \dfrac{\left|\mathcal{G}\right|}{n}\right\rfloor +1$.
Note that for our purposes, the path does not need to be closed, so
the existence of a \textit{Hamiltonian cycle} is not required. 
\item The environment is a \textit{k-connected} graph, a graph that stays
connected when any fewer than $k$ of its vertices are removed, with
$n\le k$ agents patrolling it. In \cite{Gyori}, Gyori shows that
a \textit{k-connected} graph can always be partitioned into $k$ components,
including \textit{k} different and arbitrarily selected vertices.
\end{enumerate}
We shall analyze the evolution of the system as a stochastic process,
and base our proof of convergence on the theory of Markov chains.
The remainder of this section is organized as follows: we define a
``system configuration'' by considering a simple evolution example
and show that there always exists a mapping from configurations to
well defined ``states''. We then look at more complex configurations
and realize that although the set of configurations is not bounded,
it can be divided into a finite set of equivalence classes, each class
representing a state. Hence, we conclude that the number of states
in the Markov chain is finite, and the evolution of the configurations
maps into corresponding transitions between the states of the chain.
Next we use the concept of consistency of a region, as presented in
Lemma \ref{lem:A-region-is-consistent}, to conclude that if a balanced
partition is attained, it may persist indefinitely. This means that
balanced partitions map to recurrent states in the Markov chain. We
use this result to analyze the structure of the stochastic matrix
that describes the chain. Then we turn to prove that it is only the
balanced partitions that are mapped to recurrent states. We first
abstract the complexity of the problem by classifying all possible
graph partitions into mutually exclusive classes: uncovered, covered
but unbalanced, balanced but unstable, balanced and stable. Then we
proceed to analyze the changes that may cause the system to shift
from a configuration in one class to a configuration in another. Finally
we show that when the graph has a Hamiltonian path or is \textit{k-connected},
despite the possibility that the system may repeatedly transition
between these classes, it cannot do so indefinitely and will inevitably
have to sink into a recurrent state that belongs to a set of states
which are all assigned to the same balanced partition, forming a so-called
``recurrent class''.

\begin{definition}
The vertices and edges of the graph with their respective pheromone
markings, and the agent locations, will be called a configuration
of the system, and denoted by $\mathcal{C}$. Recall that, as discussed
in section \ref{Agents model} (sub-section ``Agents''), more straightforward
time markings, (in which the current time, $\mathcal{\varphi}_{0}\left(v\right)=t$,
is marked), can be used to emulate their equivalent pheromone markings
with temporal decay.
\end{definition}

\begin{figure}[h]
\makebox[1\columnwidth]{%
\includegraphics[scale=0.5]{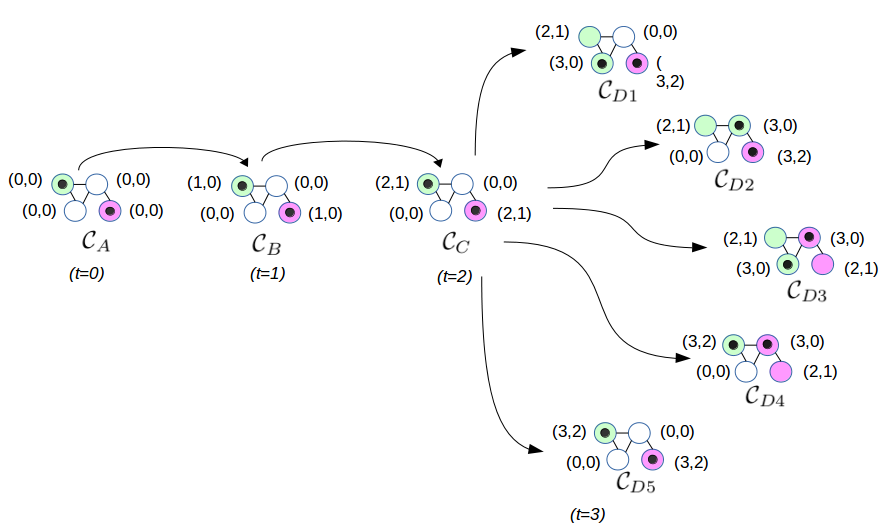}%
}\caption{\label{fig: 4 first steps} Starting at configuration\textbf{ }$\mathcal{C}_{A}$,
the system must move to $\mathcal{C}_{B}$ and then to $\mathcal{C}_{C}$.
Then it will stochastically transition to $\mathcal{C}_{D1}$, $\mathcal{C}_{D2}$,
$\mathcal{C}_{D3}$, $\mathcal{C}_{D4}$ or $\mathcal{C}_{D5}$.\protect \\
$\mathcal{C}_{C}$ and $\mathcal{C}_{D5}$ are equivalent since both
have the same pheromone decay levels. The values shown are time markings
$\left(\varphi_{0},\varphi_{1}\right)$.}
\end{figure}

The diagram of Figure \ref{fig: 4 first steps} depicts an example
of transitions between configurations of a system of a 4-vertex graph
on which two agents, the \textit{green} and the\textit{ cyan}, are
active\textit{. }$\mathcal{C}_{A}$ is the initial configuration at
\textit{$t=0$ }. Agents, shown as dots, are placed at some random
initial vertices that are colored according to the agent patrolling
them, \textit{green} at the top-left vertex and the\textit{ cyan}
at the bottom-right vertex. The pheromone markings on vertices are
shown as ordered pairs of time markings $\left(\varphi_{0},\varphi_{1}\right)$,
where $\varphi_{0}$ is the most recent time that a vertex was marked
with pheromone, and $\varphi_{1}$ is the previous time that the vertex
was marked (so, generally, $\varphi_{0}>\varphi_{1}$). When the two
agents \textit{wake up} at time slot $t=1,$ the readings of pheromone
marks around are all zero, therefore the \textit{double visit} condition
$\varphi_{1}\left(u\right)>\varphi_{0}\left(v\right)$, is not met,
and conquests are prohibited. Hence the only possible action for the
agents is \textit{reset}, i.e. leaving a fresh pheromone mark $\varphi_{0}\left(u\right)=1$
(the current time), and thus transitioning to the new configuration
$\mathcal{C}_{B}$. At $t=2$, the double visit condition is again
not met, and the system transitions to $\mathcal{C}_{C}$ (recall
that according to the AntPaP algorithm, at the time that a pheromone
is marked on a vertex, the previous mark is moved from $\varphi_{0}$
to $\varphi_{1}$). At $t=3$, conquest conditions are met for both
agents, and the system may now transition to any one of the configurations
$\left\{ \mathcal{C}_{D1},\mathcal{C}_{D2},\mathcal{C}_{D3},\mathcal{C}_{D4}\right\} $,
according to whether one or more conquests succeed, or to $\mathcal{C}_{D5}$,
with a probability of $\left(1-\rho_{c}\right)^{3}$, if all 3 conquest
attempts fail . It is important to notice that $\mathcal{C}_{D5}$
is equivalent to $\mathcal{C}_{C}$ (and in fact the configurations
are identical in terms of temporally decaying pheromone markings since
$\left(3,2\right)_{\textnormal{at }t=3}\equiv\left(2,1\right)_{\textnormal{at }t=2}\equiv\left(t,t-1\right)_{\textnormal{at }t}$).
In both we have a $\varphi_{0}$ pheromone that has been freshly marked,
so $\left(t-\varphi_{0}\right)=0$ , and a $\varphi_{1}$ which has
been marked on the immediately preceding time slot, hence $\left(t-\varphi_{1}\right)=1$.
We can, therefore, map each configuration of Figure \ref{fig: 4 first steps}
to distinct ``states'', states $1,2,\ldots,7$, as depicted in Figure
\ref{fig:4 first states}, and group $\mathcal{C}_{D5}$ and $\mathcal{C}_{C}$
into an \textit{equivalence class} of system configurations, were
configurations in such equivalence class map to the same ``state''
(in this case the equivalence class that includes $\mathcal{C}_{D5}$
and $\mathcal{C}_{C}$ maps to state 3).

\begin{figure}[h]
\makebox[1\columnwidth]{%
\includegraphics[scale=0.75]{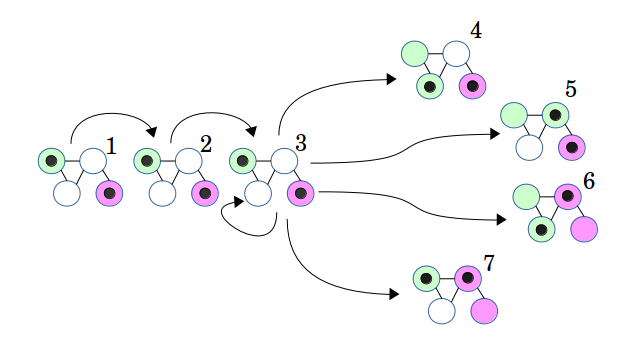}%
}\caption{\label{fig:4 first states} States Matching the Configurations of
Figure \ref{fig: 4 first steps} }
\end{figure}

Calculating transition probabilities between ``states'' is straightforward,
but sometimes subtleties arise, for example: 

$\mathcal{P}\left(State\ 1\ \to\ State\ 2\right)=$$\mathcal{P}\left(State\ 2\ \to\ State\ 3\right)=1$,

$\mathcal{P}\left(State\ 3\ \to\ State\ 3\right)=\left(1-\rho_{c}\right)^{3}$
(both agents attempt conquests, but fail), and

$\mathcal{P}\left(State\ 3\ \to\ State\ 7\right)=\frac{1}{2}\left(\rho_{c}\right)(1-\rho_{c})+\frac{1}{2}\left(1-\rho_{c}\right)^{2}\rho_{c}$
(due to the strong asynchronous assumption, the \textit{cyan} agent
has a probability of $\frac{1}{2}$ to move first within the time
slot, and if so, its attempted conquest has a probability $\rho_{c}$
to succeed. The result is multiplied by a probability $(1-\rho_{c})$
that the \textit{green} agent fails in its conquest. If the \textit{green
}moves first, there is a probability $\left(1-\rho_{c}\right)^{2}$
that it would fail both attempts, and then a probability $\rho_{c}$
that the \textit{cyan }succeeds).
\begin{definition}
An edge $\left(u,v\right)$ which is part of a pair-trail pattern
marking, so that $\varphi\left(u,v\right)=\varphi\left(v,u\right)=\varphi\left(v\right)$,
will be called a pair-trail edge.
\end{definition}

\begin{definition}
For a time interval in which no conquests or losses of a vertex in
a region occur, the region is considered ``stable''.\end{definition}
\begin{lemma}
\label{lem:consistent and zero =00003D M states}A system comprising
a graph $\mathcal{G}$ with time-invariant topology, and $k$ agents,
in a configuration $\mathcal{C}_{N}$, satisfying
\begin{enumerate}
\item The $k$ regions marked by the agents are consistent
\item Pheromone marks exist only on vertices and pair-trail edges inside
the k-regions, and no pheromone markings exist elsewhere on the graph,
\end{enumerate}
\noindent will transition through a finite sequence of M of states,
where M is the least common multiple of the cover times of the k-regions
(i.e. $M=lcm\left(\Delta t_{1},\Delta t_{2},\ldots,\Delta t_{k}\right))$,
prior arriving to a configuration $\mathcal{C}_{M}$ equivalent to
$\mathcal{C}_{N}$ (i.e. $\mathcal{C}_{M}\sim\mathcal{C}_{N})$, as
long as the k regions are stable (no conquests or losses occur). \end{lemma}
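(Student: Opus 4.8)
The plan is to use the stability hypothesis to collapse the stochastic dynamics to a deterministic periodic patrol, and then to read off the period as $M$.

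First I would note that, conditioned on no conquest and no loss occurring, the only branches of Rule AntPaP that can alter the configuration are \emph{Advance}, \emph{Backtrack} and \emph{Reset}. The \emph{Conquer} and \emph{Lose} branches each require a successful stochastic draw ($x<\rho_{c}$, respectively $y<\rho_{l}$), which stability excludes; a merely \emph{attempted} conquest only sets the transient \emph{lose} flag and touches no pheromone value, and by hypothesis the flag never fires. The \emph{Rejoin Isolated} branch cannot apply either, since hypothesis (1) (consistency) means every self-marked vertex already lies on the pair-trail spanning tree. Thus each agent $\alpha$ simply follows the oldest pair trail, advancing into and later backtracking out of each vertex of the fixed tree spanning $\mathcal{G}_{\alpha}$. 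Hypothesis (2) guarantees there are no stray marks outside the $k$ regions to drift with time, and that every uncovered vertex stays unmarked, so the global configuration decomposes into $k$ sub-configurations supported on disjoint vertex sets.

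Next I would prove single-region periodicity. A depth-first traversal of a tree on $|\mathcal{G}_{\alpha}|$ vertices advances and backtracks once across each of its $|\mathcal{G}_{\alpha}|-1$ edges, and the cycle closes with one \emph{Reset} slot at the root, for a total of $2(|\mathcal{G}_{\alpha}|-1)+1=2|\mathcal{G}_{\alpha}|-1=\Delta t_{\alpha}$ steps, exactly the cover time. In that cycle every vertex of the region is refreshed exactly once, so after $\Delta t_{\alpha}$ steps the agent is back at the root and every time mark $\varphi_{0},\varphi_{1}$ in the region has been shifted by the single constant $\Delta t_{\alpha}$; in particular $\varphi_{0}(v)-\varphi_{1}(v)$ stays equal to $\Delta t_{\alpha}$ on every $v$. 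Since $\sim$ compares decaying-pheromone ages $t-\varphi_{0}(\cdot)$ and $t-\varphi_{1}(\cdot)$, which are invariant under a common shift of $t$ and all marks, the sub-configuration at time $t+\Delta t_{\alpha}$ is equivalent to the one at time $t$. I would also record that $\Delta t_{\alpha}$ is the \emph{minimal} period of the region: within a cycle distinct vertices are refreshed at distinct phases, so the age pattern (together with the root position) recurs only after a full traversal.

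Then I would combine. Because the $k$ sub-configurations live on disjoint vertex sets and no agent ever writes into another's region under stability --- the only cross-region interaction is a read for a conquest check, which never changes anything once conquests are excluded --- they evolve independently, and the within-slot asynchronous ordering is therefore irrelevant to the resulting configuration. The global configuration is equivalent to $\mathcal{C}_{N}$ exactly when every agent has simultaneously completed an integer number of its cycles, i.e. after a step count divisible by every $\Delta t_{\alpha}$; the least such count is $M=\mathrm{lcm}(\Delta t_{1},\dots,\Delta t_{k})$. Hence the system runs through a finite sequence of $M$ states before reaching $\mathcal{C}_{M}\sim\mathcal{C}_{N}$. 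The step I expect to be the main obstacle is the single-region bookkeeping: verifying directly from the \emph{Advance}/\emph{Backtrack}/\emph{Reset} updates that exactly one full cycle both returns the agent to the root and shifts every $\varphi_{0},\varphi_{1}$ by the same constant, preserving the idle time $\Delta t_{\alpha}$ on all vertices. This is precisely where hypothesis (1), the consistency guaranteed after two cycles by Lemma \ref{lem:A-region-is-consistent} and here assumed at the outset, is needed, since without it the $\varphi_{1}$ marks inherited from before the cycle need not encode the cover time and the shift would fail to be uniform.
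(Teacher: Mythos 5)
Your proof is correct and follows essentially the same route as the paper's: each consistent, stable region is treated as an independent cyclic process whose period is its cover time $\Delta t_{i}$, so the global configuration recurs (up to equivalence of the decaying-pheromone ages $t-\varphi_{0},\,t-\varphi_{1}$) after exactly $M=\mathrm{lcm}\left(\Delta t_{1},\ldots,\Delta t_{k}\right)$ steps. The additional bookkeeping you supply --- ruling out the Conquer/Lose/Rejoin branches under the stability hypothesis and counting the $2\left|\mathcal{G}_{\alpha}\right|-1$ steps of the tree traversal --- is precisely what the paper leaves implicit (note only that excluding the Rejoin branch really rests on hypothesis (2) together with consistency, since it is (2) that forbids self-marked isolated branches), so it sharpens rather than changes the argument.
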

\begin{proof}
A consistent region $\mathcal{G}_{i}$ is a region for which $\varphi_{0}\left(v\right)-\varphi_{1}\left(v\right)=\Delta t_{i},\ \forall v\in\mathcal{G}_{i}$,
where $\Delta t_{i}$ is the cover time of region $\mathcal{G}_{i}$
(see Lemma \ref{lem:A-region-is-consistent}). Therefore, as long
as the region is stable, all vertices and all pair-trail edges of
$\mathcal{G}_{i}$, cyclically return to the exact same decaying pheromone
levels, i.e. exactly the same temporal differences $t-\varphi_{0}\left(v\right)$
(where $t$ is the current time) every $\Delta t_{i}$ steps. We can
thereby consider the $k$ regions in the partition, each repeating
its pheromone level markings independently, as $k$ cyclic processes
each with its own cycle time. Hence, all the processes complete an
integer number of cycles every $M$ steps, where $M$ is the least
common multiple of the cycle times, $M=lcm\left(\Delta t_{1},\Delta t_{2},\ldots,\Delta t_{k}\right)$,
which ensures that all vertices of all the $k$ regions in the configuration
\textit{$\mathcal{C}_{M}$} that was reached have exactly the same
temporal differences as in \textit{$\mathcal{C}_{N}$, }and therefore
\textit{$\mathcal{C}_{M}\sim\mathcal{C}_{N}$.} \qed
\end{proof}

In the above Lemma, we required to have no pheromones at all on edges
that are not pair-trail edges. But, if there were markings on such
edges, the patrolling agents would simply ignore them, according to
the AntPaP algorithm. Therefore such markings have no influence on
the possible future evolutions of the system. We shall formally define
states of the system by grouping together configurations that have
``the same future evolutions'', i.e. same possible future configuration
transition sequences with the same probabilities. For example, as
seen above, configurations that differ only by levels of pheromones
on non pair-trail edges form such equivalence classes, hence each
class defines a distinct state. In systems theory, this is the classical
\textit{Nerode} equivalence way of defining states.

Accordingly, two configurations that do not have the exact same patrolling
routes (either not having the same $k$ regions, or the agents have
developed different patrolling paths within the regions) cannot have
the same future evolutions, since, even without any conquests or loses,
the future sequences of configurations that the systems go through
are different due to the different patrolling steps. Therefore these
two configurations can not belong to the same equivalence class thus
represent distinct states.

Next we turn to discuss pheromone markings that may exist on vertices
and edges that are \uline{not} part of any current patrolling route,
hence outside of all the regions. Such scenarios may occur as result
of a successful conquest by an agent that disconnects the region of
a neighbor and hence prunes the spanning tree of that agent, splitting
it into two or more disjoint branches. Clearly, the latter agent remains
on one of these branches, while the others cease to be part of its
patrolling route and remain ``isolated''. 
\begin{definition}
A segment of a spanning tree (i.e. a set of vertices marked with pheromones
and connected by pair trails) that is \uline{not} part of a patrolling
route, thus not included in any of the regions, forms what we shall
call an isolated branch. For completeness, a single such vertex that
is not connected by a pair-trail is also considered an isolated branch.
\end{definition}

In our pheromone marking model we have not limited the pheromone decay,
thus, on an isolated branch, pheromones may decay indefinitely. This
means that there is no bound to the set of configurations, and raises
the question of whether there exists a bound to the set of equivalence
classes to which they can belong, hence a bound on the number of states
of the system. We shall, therefore, consider configurations that include
isolated branches, and analyze the effect of pheromone decay in these
branches on the evolution of the system, or more precisely, how such
decay influences ``future'' system states. We have already seen
that two configurations that do not have the exact same patrolling
routes must represent different states, thus we shall verify that
this distinction, by itself, does not produce an unbounded number
of states. The number of permutations of possible $k$ stable regions
is finite (in a finite graph), and for each such permutation, the
number of permutations of possible routes for the $k$ agents must
be finite too (since each of the $k$ regions have a finite number
of edges). 

We are therefore left to show that starting at any arbitrary configuration
with $k$ agents patrolling $k$ regions that also include isolated
branches, all future evolutions in an arbitrarily large interval in
which all $k$ regions remain stable, can be grouped into a finite
number of equivalence classes.

Let us consider a setup of $k$ regions and an isolated branch, where
the regions are stable in an arbitrarily large interval, and further
assume that a vertex $v$ of the isolated branch is adjacent to a
vertex $u$ in one of the regions (See figure \ref{fig:isolated branch prunning}).
Since the regions are stable, every patrolling cycle the value $\varphi_{0}\left(u\right)$
is refreshed, thus its time-marking increases with each cycle. On
the other hand, the time marking $\varphi_{0}\left(v\right)$ of the
vertex in the branch remains unchanged. When the agent is on vertex
$u$ the following scenarios may arise:

\begin{figure}[h]
\makebox[1\columnwidth]{%
\includegraphics[scale=0.55]{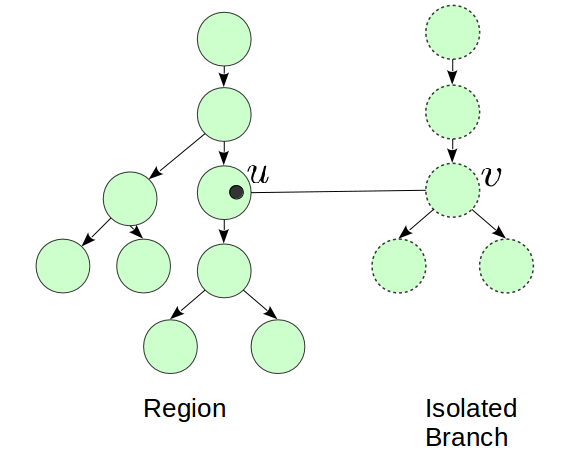}%
}

\caption{\label{fig:isolated branch prunning} An agent (black dot) on a vertex
$u$ adjacent to a vertex $v$ on an isolated branch that is marked
with the agent's pheromone. The solid line represents an edge, the
arrows represent pair-trail edges. Rejoining the lost vertex $v$
on the isolated branch results in pruning the isolated branch into
disjoint parts. One with the upper two vertices, another with the
lower three.}
\end{figure}

\begin{enumerate}
\item The vertex $v$ on the branch might be marked with the same agent's
pheromone, and hence moving into the adjacent vertex consists of the
action of \textit{rejoining }a vertex previously lost. According to
AntPaP, agents check for a double visit condition, i.e. $\varphi_{1}\left(u\right)>\varphi_{0}\left(v\right)$,
prior to this action. When traversing onto the vertex, e.g. at time
$t$, the agent marks there a fresh pheromone $\varphi_{0}\left(v\right)=t$.
This may result in splitting the isolated branch into two or more
disjoint branches. The agent will then follow the pair-trails emanating
from vertex $v$ at which it is presently located, oblivious to the
fact that pheromone marks on pair-trails and vertices of the branch
are old. Thereafter the agent traverses the section of the previously-isolated
branch it is located on, thus refreshing its marks, until all the
section is visited (e.g., in the example of Figure \ref{fig:isolated branch prunning},
the agent will visit all vertices on the lower section), then it returns
to the vertex $u$ from which the conquest was launched. The other
disjoint branches (in the example of Figure \ref{fig:isolated branch prunning},
the upper section) remain isolated.
\item if the branch is marked with another agent's pheromone, and conquest
conditions are met (e.g. double visit ($\varphi_{1}\left(u\right)>\varphi_{0}\left(v\right)$)
and the regions size difference is not exactly one vertex ( or equivalently
the difference in cover time is not 2 , i.e., $\left(\varphi_{0}\left(u\right)-\varphi_{1}\left(u\right)\right)+2\ne\left(\varphi_{0}\left(v\right)-\varphi_{1}\left(v\right)\right)$
), the agent may attempt a conquest on the vertex $v$ and thereafter
on all the vertices of the branch, one by one.
\item if the branch is marked with another agent's pheromone but the double
visit condition is \uline{not} met (i.e. $\varphi_{1}\left(u\right)\ngtr\varphi_{0}\left(v\right)$),
it may remain so only for at most two cycles of patrolling. Note that
$\varphi_{1}\left(u\right)$ is growing with each agent's visit, while
$\varphi_{0}\left(v\right)$ remains unchanged and as a result a double
visit condition will necessarily arise. Furthermore, meeting the double
visit condition also ensures that all the additional conquest conditions
are met at the same time, since either the cover time encoded in the
vertex of the isolated branch indicates a region size conducive to
conquests, or the agent recognizes that the neighboring region is
stagnated (not being patrolled for too long, i.e., $\left(t-\varphi_{1}\left(v\right)\right)>\left(\varphi_{0}\left(v\right)-\varphi_{1}\left(v\right)\right)$).
Therefore any further decay of the pheromone mark on the isolated
branch will not influence the future behavior of the system.
\end{enumerate}
A double visit is, therefore, a sufficient condition for an agent
to conquer or rejoin a vertex on an adjacent isolated branch, hence
we conclude the following:
\begin{lemma}
\label{lem:N+M states}A system comprising a graph $\mathcal{G}$
with time-invariant topology, and $k$ agents, in a configuration
$\mathcal{C}$ that includes exactly one isolated branch will transition
through a finite sequence of at most $N+M$ states, where $N=2*max\left(\Delta t_{1},\Delta t_{2},\ldots,\Delta t_{k}\right)$
and $M=lcm\left(\Delta t_{1},\Delta t_{2},\ldots,\Delta t_{k}\right)$,
where $\Delta t_{i}$ is the cover time of region $\mathcal{G}_{i}$,
as long as all the regions are stable (no conquests or losses occur). \end{lemma}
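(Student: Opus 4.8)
The plan is to combine Lemma~\ref{lem:consistent and zero =00003D M states} (the $M$-step cyclic behavior of stable consistent regions) with the analysis of the three isolated-branch scenarios that immediately precede this Lemma's statement. The key observation is that the presence of a single isolated branch introduces only a \emph{bounded transient} before the system's future reduces to the purely cyclic behavior already characterized. First I would argue that the isolated branch can affect the evolution only through the scenario in which some region-vertex $u$ is adjacent to a branch-vertex $v$, and only through the double-visit condition $\varphi_1(u)>\varphi_0(v)$. As noted in scenario~(3), $\varphi_1(u)$ strictly increases by a fixed increment once per patrolling cycle of the region containing $u$ (because that region is stable and consistent), whereas $\varphi_0(v)$ on the isolated branch never changes while it remains isolated. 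Hence the double-visit condition \emph{must} become satisfied after finitely many cycles, and once met it simultaneously guarantees all remaining conquest/rejoin conditions (the size-difference or stagnation checks), as established in the discussion of scenarios~(2) and~(3).

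\noindent Next I would bound this transient explicitly. The worst case is when the branch-vertex's time-mark $\varphi_0(v)$ is as large as possible relative to $\varphi_1(u)$, so that the agent must complete the maximum number of cycles before $\varphi_1(u)$ overtakes it. Since each region's cover time is at most $\max(\Delta t_1,\ldots,\Delta t_k)$ and the relevant comparison involves the difference $\varphi_0-\varphi_1$ (itself a cover time), waiting out the double-visit condition costs at most two full traversals of the largest region. This yields the transient bound $N = 2\cdot\max(\Delta t_1,\ldots,\Delta t_k)$: the factor of two accounts for the fact that consistency and the double-visit comparison each involve the pair $(\varphi_0,\varphi_1)$, so up to two cycles of the slowest region are required before the branch is necessarily either conquered, rejoined, or rendered irrelevant to all future decisions.

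\noindent Once the transient expires, the isolated branch is either absorbed (its vertices rejoined or conquered, after which they belong to a region) or it has ceased to influence any agent's decisions, since any further pheromone decay on $v$ only makes the double-visit condition ``more satisfied'' and thus changes nothing, as concluded in scenario~(3). Either way, after at most $N$ states the system reaches a configuration to which Lemma~\ref{lem:consistent and zero =00003D M states} applies: the regions are consistent and stable, and no pheromone markings outside the regions remain \emph{relevant}. That Lemma then supplies the remaining $M = lcm(\Delta t_1,\ldots,\Delta t_k)$ states of the cyclic portion, giving the total bound of $N+M$ states.

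\noindent \textbf{The main obstacle} I anticipate is making rigorous the claim that the isolated branch influences only a \emph{bounded} number of distinct states, despite pheromone on the branch being able to decay \emph{indefinitely}. The subtlety is that unboundedly many numerically distinct configurations (differing by the decayed time-mark $\varphi_0(v)$) must be shown to collapse into finitely many \emph{states} under the Nerode equivalence described after Lemma~\ref{lem:consistent and zero =00003D M states}. The crux is scenario~(3)'s assertion that further decay does not change future behavior once the comparison thresholds are crossed, so that all sufficiently-decayed branch configurations share identical future transition sequences and probabilities, hence map to the same state. I would need to verify carefully that the double-visit condition, together with the simultaneous satisfaction of the size-difference and stagnation conditions, genuinely screens off all further dependence on the exact value of $\varphi_0(v)$ --- this is what converts the infinitude of configurations into the finite state count, and it is the step where the argument's validity truly rests.
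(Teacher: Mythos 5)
Your proposal is correct and follows essentially the same route as the paper: a transient of at most $N=2\cdot\max\left(\Delta t_{1},\ldots,\Delta t_{k}\right)$ states until the double-visit condition is globally and permanently met --- justified, as in your scenario-(3) argument, by $\varphi_{1}\left(u\right)$ being refreshed each patrolling cycle while $\varphi_{0}\left(v\right)$ stays frozen --- followed by the $M$-state cycle supplied by the preceding lemma, with any two configurations that agree inside the $k$ regions and both globally satisfy the double-visit condition declared equivalent regardless of further decay on the branch. The screening-off step you single out as the crux is exactly the equivalence the paper asserts at that point, so there is no gap.
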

\begin{proof}
The completion of two patrolling cycles of a region by its patrolling
agent ensures that the double visit condition is met at any vertex
$u$ of the region adjacent to a vertex $v$ of the isolated branch
(see discussion above). Therefore, after an interval of \textit{$N=2*max\left(\Delta t_{1},\Delta t_{2},\ldots,\Delta t_{k}\right)$
}(i.e. when the agent on the largest region completed two patrolling
cycles) it is certain that the double visit condition is globally
met (i.e. for any vertex $u$ on any of the regions adjacent to any
vertex $v$ in the isolated branch). Moreover, it will be met on any
time step that follows (as long as the regions are stable). Hence
any two configurations on which the double visit condition is globally
met, and have the same levels of pheromones on vertices and pair-trail
edges that are in the $k$ regions (but may differ in levels of pheromones
on the isolated branch) are equivalent.

Since we also know, based on Lemma \ref{lem:consistent and zero =00003D M states},
that every\\
 $M=lcm\left(\Delta t_{1},\Delta t_{2},\ldots,\Delta t_{k}\right)$
time steps, all pheromones in vertices and pair-trails included in
the $k$ regions cyclically return to the exact same decaying pheromone
levels (i.e. exactly the same temporal differences), we conclude that
a system in a configuratio\textit{n} \textit{$\mathcal{C}$,} will
transition at most $N$ distinct states to a configuration \textit{$\mathcal{C}_{N}$}
(on which the double visit condition is globally met) and then will
cyclically transition through $M$ states reaching, at each cycle,
a configuration \textit{$\mathcal{C}_{M}\sim\mathcal{C}_{N}$.}\qed
\end{proof}

Our next analysis is of the effect of multiple isolated branches on
future evolutions of a system with stable regions. Consider two scenarios,
both starting with the same configuration that has one isolated branch.
An arbitrary time later, a conquest creates another isolated branch,
the second branch being the same in both, only the \textit{time} of
its creation is different. Hence, there can be an arbitrarily large
time difference between the creation of the second isolated branch
in the two scenarios. Contemplating the case of an arbitrary number
of isolated branches created each at an arbitrary time, the complexity
of such presented scenarios may substantially grow. Nevertheless,
in term of system states the above complexity does not matter. Once
the decay of pheromones on an isolated branch is such that the double
visit condition is globally met, the conquest or rejoin threshold
is triggered, and afterwards no amount of further decay affects the
future evolutions of the system. This insensitivity holds regardless
of the presence of other isolated branches, simply because the double
visit is a \uline{local} condition, limited to the time difference
encoded in pheromones a on a vertex in a region and an adjacent vertex
on the branch. Thus, any two configurations that differ only by level
of pheromones on isolated branches for which the double visit condition
is globally met, are equivalent. Particularly, there must exist a
configuration such that the level of pheromones on isolated branches
is at its ``highest level'', i.e. the time marking on each vertex
of each isolated branch is the highest that allows the double visit
condition to be globally met. A branch with such ``highest level''
will have one vertex $v$, where $v=argmax\left(\varphi_{0}\left(w\right),\ w\in\textmd{isolated branch}\right)$
with a time-mark value $\varphi_{0}\left(v\right)=t-\left(M+N\right)$
where $t$ is the current time (i.e. a pheromone was left there $M+N$
steps before the current time), and all other vertices and edges with
(lower) values that agree with the ordered directions of pair-trails. 

Hence we conclude, again, that any configuration $\mathcal{C}$ that
includes multiple isolated branches will transition at most $M+N$
distinct states as long as all the regions remain stable.

\begin{theorem}
For a system with a graph $\mathcal{G}$ of stationary topology, and
$k$ agents, the set $\mathcal{S}$ of states is finite.\end{theorem}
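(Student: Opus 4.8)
The plan is to assemble the finiteness bound from the \emph{combinatorial skeleton} of a configuration together with the bounds already established on how much the pheromone levels can ever matter. First I would isolate the two sources of variation in a configuration $\mathcal{C}$. The combinatorial data consists of the assignment of each pheromone-marked vertex to one of the $k$ agents (equivalently, the $k$ regions together with the isolated branches), the set of directed pair-trail edges, and the $k$ agent positions. Since $\mathcal{G}$ is finite and there are only $k$ agents, each of these takes only finitely many values, so there are only finitely many skeletons. The analog data consists of the time markings $\varphi_{0},\varphi_{1}$ on vertices and $\varphi$ on edges; this is where unboundedness could in principle enter, since time markings grow without limit as $t\to\infty$ and pheromones on isolated branches decay indefinitely.

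Second I would argue that, up to the Nerode equivalence defining states, only finitely many analog snapshots are distinguishable for each fixed skeleton. Every agent decision depends only on differences and orderings of time markings --- the encoded cover times $\varphi_{0}(v)-\varphi_{1}(v)$, the double-visit comparison $\varphi_{1}(u)>\varphi_{0}(v)$, the stagnation test $(t-\varphi_{1}(v))>(\varphi_{0}(v)-\varphi_{1}(v))$, and the $\mathrm{argmin}$ selections over neighbors --- never on the absolute value of $t$. For vertices and pair-trail edges inside a region the agent refreshes the markings on every patrolling cycle, so the ages $t-\varphi_{0}(v)$ and $t-\varphi_{1}(v)$ stay bounded by a fixed multiple of the maximal possible cover time $2\left|\mathcal{G}\right|-1$ (this covers the temporary-inconsistency window as well, since a stale $\varphi_{1}$ is at most two cycles old). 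Hence the in-region markings realize only finitely many relative patterns, and by Lemma~\ref{lem:consistent and zero =00003D M states} these patterns in fact recur with period dividing $M=\mathrm{lcm}(\Delta t_{1},\dots,\Delta t_{k})$ while the regions are stable.

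Third, and this is the step I expect to carry the real weight, I would treat the isolated branches, the one place where pheromone ages are genuinely unbounded. Here I invoke the analysis preceding the statement: because the double-visit condition is a purely local comparison between a region vertex $u$ and an adjacent branch vertex $v$, once the decay on a branch is old enough for the double-visit (and therefore every conquest or rejoin) condition to be globally met, no further decay alters any future decision. By Lemma~\ref{lem:N+M states} and the subsequent multiple-branch discussion, every branch is thus equivalent to its ``highest level'' form, in which the freshest branch vertex carries $\varphi_{0}(v)=t-(M+N)$ and the remaining marks decrease consistently with the pair-trail directions; the relevant branch ages are capped at $M+N$ and so take only finitely many values, and this holds independently for each branch precisely because the triggering comparison is local. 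Pheromones on non-pair-trail edges are ignored by AntPaP and contribute nothing to the equivalence class. Finally I would combine the three observations: a state is determined by one of finitely many skeletons together with one of finitely many bounded (and, on branches, capped) relative pheromone patterns, so the set $\mathcal{S}$, being contained in this finite product, is finite. The main obstacle is exactly the isolated-branch unboundedness, and it is defused entirely by the locality of the double-visit trigger.
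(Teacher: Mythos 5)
Your proposal is correct and is essentially the paper's own argument: discard markings on non-pair-trail edges as irrelevant, use the locality of the double-visit condition to replace every sufficiently decayed isolated branch by its capped ``highest level'' form whose freshest mark is $t-(M+N)$ (as in Lemma~\ref{lem:N+M states} and the multiple-branch discussion), and then count the finitely many combinations of $k$ regions, patrolling routes, and isolated-branch assignments. The only difference is cosmetic: you additionally spell out why in-region pheromone ages stay bounded (they are refreshed every cycle), a point the paper's proof leaves implicit.
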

\begin{proof}
Based on the above analysis, we conclude:
\begin{enumerate}
\item any configuration $\mathcal{C}$ is equivalent to a configuration
$\mathcal{C}_{P}$ identical to $\mathcal{C}$ except for having no
pheromone marking on edges that are not pair trails.
\item any configuration $\mathcal{C}_{P}$ that includes isolated branches
is equivalent to a configuration $\mathcal{C}_{A}$ identical to $\mathcal{C}_{P}$
except by the levels of pheromones on vertices and pair-trail edges
in those isolated branches that globally meet the double visit condition
on both configurations. Specifically, in $\mathcal{C}_{A}$, vertices
and pair-trail edges on each such isolated branch, will be of a ``highest
level'', i.e. will have one vertex $v$, where $v=argmax\left(\varphi_{0}\left(w\right),\ w\in\textmd{isolated branch}\right)$
with a time-mark value $\varphi_{0}\left(v\right)=t-\left(M+N\right)$
where $t$ is the current time (i.e. a pheromone was left there $M+N$
steps before the current time), $N=2*max\left(\Delta t_{1},\Delta t_{2},\ldots,\Delta t_{k}\right)$
and $M=lcm\left(\Delta t_{1},\Delta t_{2},\ldots,\Delta t_{k}\right)$
, and all other vertices and edges with values that agree to the directions
of pair-trails. 
\end{enumerate}
Therefore any configuration $\mathcal{C}$ is grouped in an equivalence
class with a correspondingly ``representative'' configuration $\mathcal{C}_{A}$.
To find out how many such classes exist, we observe that a $\mathcal{C}_{A}$
includes the following elements: $k$ regions with $k$ patrolling
routes, isolated branches that do not globally meet the double visit
condition and isolated branches of a highest level of time-markings
that globally meet the double visit condition. 

However, we have that:

\renewcommand{\labelenumi}{(\alph{enumi})} 
\begin{enumerate}
\item The number of possible choices of $k$ regions is finite (in a finite
graph).
\item For any arbitrary set of $k$ regions, the number of possible routes
in the $k$ regions is finite. 
\item For any arbitrary set of $k$ regions (with a particular selection
of $k$ routes) the number of vertices \uline{not} included in
these is finite, thus the number of possible isolated branches is
finite (and their possible assignments to whether they meet the double
visit condition or not is also finite).
\end{enumerate}
Therefore the set of possible representative configurations $\mathcal{C}_{A}$
is finite, each defines an equivalence class corresponding to a distinct
state of the Markov chain, hence the set $\mathcal{S}$ of system
states is finite. \qed
\end{proof}

Concluding the above analysis we see that in spite of the infinite
number of configurations possible for a system, the number of system
states, though quite large, is finite. Let us denote the finite set
of states of a system by \textit{S.}
\begin{definition}
(Gallager \cite{Stochastic processes book}), A Markov chain is an
integer-time process, $\left\{ X_{n},n>0\right\} $ for which the
sample values for each random variable $X_{n},n>1$ lie in a countable
set S and depend on the past only through the most recent random variable
$X_{n-1}$.
\end{definition}
Clearly, any state of the system at time $n$, formally represented
by $X_{n}\in S$, is dependent only on the previous state $X_{n-1}\in S$,
since our agents have no memory, and their decisions are based solely
on readings from vertices and edges of the configuration, which are
completely described by $X_{n-1}$. We can, therefore, analyze the
evolution of the system based on the theory of Markov chains. Our
aim is to prove that the Markov chain is not irreducible (i.e. given
enough time, the probability to reach some of its states tends to
zero), and that all its recurrent states represent balanced partitions.
To proceed with our analysis, we notice that the size of set $S$
grows very fast with the size of the graph. Calculations show that
even the simple example of Figure \ref{fig:4 first states} develops
to a surprisingly large chain. In order to be able to describe the
evolution of the system in a simple manner, we also define a \textit{partition}
of the environment.
\begin{definition}
The coloring of each vertex of a configuration $\mathcal{C}$ by its
patrolling agent along with the set of unvisited vertices form a partition
$\mathcal{P}$ of the graph. Partitions are unconcerned about the
levels of pheromones on the vertices and indifferent to agent locations,
thus only exhibit the regions of $\mathcal{C}$. 
\end{definition}

Many different configurations (and hence states too) correspond to
the same partition, therefore we can use the concept of a \textit{partition}
as an abstraction referring to all those configurations.

\begin{figure}[h]
\makebox[1\columnwidth]{%
\includegraphics[scale=0.7]{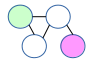}%
}\caption{\label{fig:float initial-partition}An example partition $\mathcal{P}$
of the graph}
\end{figure}

Figure \ref{fig:float initial-partition} is an example of a partition
of the environment graph that the system we discussed above arrived
to. From our previous discussion we know that it represents a set
of states of the underlying Markov chain. One characteristic of that
set of states is that it contains a cyclic path. This reflects the
fact that agents may cyclically repeat their patrolling route for
some period of time during which conquests or losses do not occur,
and the partition remains stationary. In fact, having a cyclic path
in the underlying Markov chain is characteristic of any reachable
partition.

Eventually conquests or losses are stochastically enabled leading
to a different partition, and, as a result, to a different set of
underlying states. In Figure \ref{fig:transition of partitions },
the system may remain in partition A for a while, as the underlying
chain cycles through the relevant states, but eventually it will probabilistically
transition to one of the partitions B,C,D,E. Note that the transition
from A to D means that both agents conquered one vertex each during
the same time-slot.

\begin{figure}[h]
\makebox[1\columnwidth]{%
\includegraphics[scale=0.5]{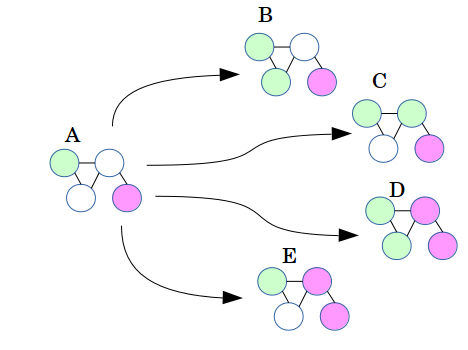}%
}\caption{\label{fig:transition of partitions }Transitions between partitions
(note that these transitions may correspond to many different transitions
between possible configurations or states, see figure \ref{fig:4 first states})}
\end{figure}

A \textit{recurrent class }in a Markov chain is a set of states which
are all accessible from each other (possibly passing through other
states), and no state outside the set is accessible (Gallager \cite{Stochastic processes book}).
The following Lemma shows that the set of states corresponding to
any balanced partition includes recurrent classes:
\begin{lemma}
\label{lem:C4}If a system remains in a balanced partition for a period
of time equal to twice the cover time of its largest region, it will
remain so indefinitely. \end{lemma}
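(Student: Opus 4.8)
The plan is to show that once a balanced partition has persisted for a period equal to twice the cover time of its largest region, no further conquest or loss can ever be triggered, so the partition is frozen forever. The key observation is that a balanced partition is one in which every pair of neighboring regions differs in size by at most one vertex, and by the consistency Lemma~\ref{lem:A-region-is-consistent}, after each region has been patrolled twice its pheromone markings accurately encode its own cover time on every one of its vertices. Thus after an interval of twice the \emph{largest} cover time, \emph{every} region is simultaneously consistent, and the sizes read off the markings agree with the true region sizes.

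First I would invoke Lemma~\ref{lem:A-region-is-consistent} region by region: since the partition is unchanged throughout the interval and twice the largest cover time exceeds twice each individual cover time, each region has completed at least two full patrolling cycles and is therefore consistent. Next I would examine the conquest conditions of the AntPaP rule at an arbitrary border edge $(u,v)$ with $u\in\mathcal{G}_\alpha$ and $v\in\mathcal{G}_\beta$. Because both regions are now consistent, the size comparison encoded in $\varphi_1(u)>\varphi_0(v)$ (the double-visit test) faithfully reflects whether $\left|\mathcal{G}_\alpha\right|<\left|\mathcal{G}_\beta\right|$. In a balanced partition neighboring regions differ by at most one vertex, so one of two cases holds: either the sizes are equal, in which case the double-visit condition cannot be met and conquest is prohibited outright; or they differ by exactly one, in which case condition~(3) of the conquest rule explicitly forbids a conquest attempt (a size difference of one being, by definition, balanced), and condition~(4) fails because a vertex in a consistent, actively patrolled region is never stagnated, i.e.\ its idle time $t-\varphi_0(v)$ does not exceed its encoded cover time. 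Hence no conquest attempt is ever launched across any border edge.

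I would then note that, with no conquest attempts possible, the vertex-loss rule is likewise never triggered, since losing a vertex is conditioned on having first failed a conquest attempt from it. Therefore neither conquests nor losses occur, the partition remains stable, and by Lemma~\ref{lem:consistent and zero =00003D M states} the system simply cycles through its $M=\operatorname{lcm}(\Delta t_1,\ldots,\Delta t_k)$ states indefinitely, returning repeatedly to equivalent configurations without ever altering the partition.

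The main obstacle I anticipate is the careful handling of the boundary case where two neighboring regions differ by exactly one vertex: here one must verify that \emph{both} the ``exactly one'' exclusion (condition~3) and the stagnation test (condition~4) correctly prevent a conquest, and in particular that consistency guarantees the smaller region's agent reads a size difference of precisely one rather than a stale value that might momentarily appear larger. This is exactly where the ``twice the cover time'' hypothesis does its work, ensuring the markings are fully refreshed so that the inconsistency described in the \emph{Temporary Inconsistency} discussion has completely dissipated before the conquest conditions are evaluated.
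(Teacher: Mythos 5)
Your proposal is correct and follows essentially the same route as the paper's proof: invoke Lemma~\ref{lem:A-region-is-consistent} to conclude that after twice the largest cover time every region is consistent, and then observe that balance plus consistency makes the conquest conditions unsatisfiable across every border edge, so the partition freezes. You merely spell out details the paper leaves implicit (the equal-size versus difference-of-one case analysis, the failure of the stagnation test, and the fact that vertex loss requires a prior failed conquest attempt), which is a faithful elaboration rather than a different argument.
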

\begin{proof}
We know from Lemma \ref{lem:A-region-is-consistent}, that if a region
remains unchanged for a period of time which is twice its cover time,
it becomes consistent, so the pheromone levels in all of its vertices
correctly indicate its cover time, $\varphi_{0}-\varphi_{1}=\Delta t$.
Therefore if the system remains in a balanced partition for a period
twice the largest cover time (the cover time of its largest region),
it is guaranteed that all the regions are consistent. Hence, we conclude
that no conquest attempts are subsequently possible, since the system
is balanced and conquest conditions can not be satisfied across any
border edge.\qed
\end{proof}

The conclusion of Lemma \ref{lem:C4} is that a balanced partition
with all its regions consistent, must correspond to a\textit{ recurrent}
(and \textit{periodic})\textit{ class }in the Markov chain\textit{.
}The random process continuously repeats a series of states based
on the individual agents' patrolling cycles. Since agents may reach
different patrolling routes for the same region, a balanced partition
may correspond to multiple recurring classes. Additionally we can
conclude the following: 
\begin{lemma}
\label{cor: C3-1}A system may enter a state in which the partition
is balanced, and then move into a state in which the partition is
not balanced. \end{lemma}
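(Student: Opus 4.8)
The plan is to prove this \emph{existence} statement by exhibiting a single reachable configuration that is balanced as a partition and yet, with strictly positive probability, transitions to a configuration whose partition is unbalanced. This is the exact counterpart of Lemma~\ref{lem:C4}: that lemma guarantees persistence of a balanced partition only \emph{after} it has survived for twice the largest cover time, i.e. only once every region has regained consistency in the sense of Lemma~\ref{lem:A-region-is-consistent}. It says nothing about the transient window between the instant a conquest first produces a balanced partition and the instant consistency is restored, and it is precisely in that window that the destabilizing step lives. The underlying mechanism is the \emph{temporary inconsistency} of Section~\ref{Temporary-Inconsistency}: right after a conquest, both the region that gained the vertex and the region that lost one are inconsistent, because only the transferred vertex carries a refreshed pheromone while every other vertex of the affected regions still encodes the pre-conquest cover time.

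Concretely, I would build the ``over-correction'' scenario sketched in Section~\ref{Temporary-Inconsistency}. Take a large region $\mathcal{G}_\alpha$ adjacent to two smaller neighbors $\mathcal{G}_\beta$ and $\mathcal{G}_\gamma$, with $\left|\mathcal{G}_\alpha\right|$ exceeding each by more than one vertex, so the starting partition is unbalanced and both $\beta$ and $\gamma$ genuinely satisfy the conquest preconditions against $\alpha$. Let $\beta$ and $\gamma$ conquer vertices from $\alpha$ in turn until, at some step, the true region counts momentarily satisfy the balance criterion (every pair of neighboring regions differing by at most one vertex); this balanced partition is genuinely entered and corresponds to a bona fide state of the chain. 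At that instant $\alpha$ has just been shrunk repeatedly while its interior markings were not yet refreshed, so a border vertex $v\in\mathcal{G}_\alpha$ still carries a stale cover time $\varphi_0(v)-\varphi_1(v)$ encoding $\alpha$'s original, larger size, and the agent poised to act still reads its own stale, smaller cover time $\varphi_0(u)-\varphi_1(u)$. Hence the size-difference test $\left(\varphi_0(u)-\varphi_1(u)\right)+2\ne\left(\varphi_0(v)-\varphi_1(v)\right)$ still reports a perceived gap of at least two vertices, and a further conquest is permitted even though the partition is already balanced. Executing that one extra conquest overshoots and leaves a neighboring pair whose true sizes differ by two, i.e. an unbalanced partition.

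Since each conquest attempt succeeds independently with the fixed probability $\rho_c\in(0,1)$, and the strongly asynchronous schedule realizing this particular ordering of activations has positive probability, the complete sequence --- entering the balanced partition and then leaving it for an unbalanced one --- has strictly positive probability, which is all that ``may'' requires. I expect the main obstacle to be discharging the \emph{double-visit} precondition $\varphi_1(u)>\varphi_0(v)$ at the destabilizing step \emph{simultaneously} with the stale size-difference condition: unlike the size test, the double-visit inequality is a genuine timing statement and need not follow from the perceived counts alone, especially when the two regions have momentarily equal true cover times. In a concrete, hand-built instance this is settled by fixing the phases of the patrolling cycles so that $\alpha$'s disrupted (pruned) cycle leaves $v$ unrefreshed while $\beta$ keeps touching $u$; alternatively, one can route the overshoot through the \emph{stagnation} clause $\left(t-\varphi_1(v)\right)>\left(\varphi_0(v)-\varphi_1(v)\right)$, which a vertex on a branch detached by the pruning will eventually satisfy regardless of phase, thereby securing the conquest precondition robustly and completing the argument.
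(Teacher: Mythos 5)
Your proposal is correct and rests on exactly the same mechanism as the paper's (much terser) proof: in the window after conquests first produce a balanced partition, one or more regions are still inconsistent in the sense of Lemma~\ref{lem:A-region-is-consistent}, so stale pheromone markings can still satisfy the conquest (or loss) preconditions and a further region change may occur, destroying the balance. The paper states this in one sentence, while you flesh it out into the concrete over-correction scenario already sketched in the Temporary Inconsistency discussion and honestly flag the double-visit timing subtlety; that extra detail is fine, and your primary resolution (the double visit was established while the large region was genuinely larger and persists for the one extra step) is the sound one.
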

\begin{proof}
Clearly, we see while the conditions for consistency are not satisfied
for one or more regions in the partition, a conquest or loss may possibly
happen, hence the partition may become unbalanced.\qed
\end{proof}
Since recurrent classes exist, the stochastic transition matrix of
the Markov chain of the patrolling system must have the form:

\begin{minipage}[t]{1\columnwidth}%
\begin{center}
\vspace{2mm}
$\mathbf{M}=\left[\begin{array}{cc}
\mathbf{T} & \mathbf{TR}\\
\mathbf{0} & \mathbf{R}
\end{array}\right]$ 
\par\end{center}

\begin{center}
\medskip{}

\par\end{center}%
\end{minipage}\\
where $\mathbf{T}$ is a matrix of transitions between transient states
(for example those corresponding to non-balanced partitions), $\mathbf{R}$
is a matrix of transitions between states in recurrent classes (for
example those corresponding to balanced partitions with consistent
regions), and \textbf{TR }describes the transitions from transient
to recurrent states (for example those corresponding to balanced partitions
with inconsistent regions as described in corollary \ref{cor: C3-1}).
Note that we assume an initial distribution given by a row vector
$\bar{\pi}_{0}$ and hence the future distributions evolve according
to $\bar{\pi}_{k+1}=\bar{\pi}_{k}\mathbf{M}$. 

Each recurrent class representing one particular occurrence of $k$
routes in the $k$ regions contributes a section\textbf{ }$\mathbf{C_{i}}$
to \textbf{R}, of the form of a shifted identity matrix,\\
\begin{minipage}[t]{1\columnwidth}%
\begin{center}
\vspace{2mm}
$\mathbf{C_{i}}=\left[\begin{array}{ccccc}
0 & \mathbf{1} & 0 & \cdots & 0\\
\vdots &  & \mathbf{1} & 0 & \vdots\\
 & \cdots & 0 & \mathbf{1} & 0\\
0 &  & \cdots & 0 & \mathbf{1}\\
\mathbf{1} & 0 &  & \cdots & 0
\end{array}\right]$ 
\par\end{center}

\begin{center}
\medskip{}

\par\end{center}%
\end{minipage}\\
The rank of $\mathbf{C_{i}}$ is $M=lcm\left(\Delta t_{1},\Delta t_{2},\ldots,\Delta t_{k}\right)$
(i.e. the least common multiple of the cover times of the \textit{k
}regions), and is a function of the sizes of the regions in the partition.
The contribution of a particular balanced partition would be a matrix
$\mathbf{P_{j}}$ that comprises of a set of $\mathbf{C_{i}}$\textbf{
}matrices on its diagonal,\\
\begin{minipage}[t]{1\columnwidth}%
\noindent \begin{center}
\textbf{
\begin{eqnarray*}
\mathbf{P_{j}} & = & \left[\begin{array}{ccccc}
\mathbf{C_{1}} & 0 & \cdots &  & 0\\
0 & \mathbf{C_{2}}\\
\vdots &  & \mathbf{C_{3}} &  & \vdots\\
 &  &  & \ddots & 0\\
0 & \cdots &  & 0 & \mathbf{C_{r}}
\end{array}\right]
\end{eqnarray*}
}
\par\end{center}%
\end{minipage}\\
 where $r$ here is the finite number of possible route combinations
in the regions that form the partition. Therefore our goal is to show
that the structure of \textbf{R} is\\
\begin{minipage}[t]{1\columnwidth}%
\begin{center}
\vspace{2mm}
$\mathbf{R}=\left[\begin{array}{ccccc}
\mathbf{P_{1}} & 0 & \cdots &  & 0\\
0 & \mathbf{\mathbf{P_{2}}}\\
\vdots &  & \mathbf{P}_{\mathbf{3}} &  & \vdots\\
 &  &  & \mathbf{\ddots} & 0\\
0 & \cdots &  & 0 & \mathbf{P_{n}}
\end{array}\right]$\medskip{}

\par\end{center}%
\end{minipage}\\
 \\
 listing contributions from a finite number $n$ of possible balanced
partitions. To achieve this we must prove that recurring classes that
are \uline{not} representing balanced partitions do not exist.
This is done next.

To visualize the problem, we will classify all possible partitions
as shown in Table 1, recalling that a partition $\mathcal{P}$ represents
a set of states in the underlying Markov chain:

\begin{table}[h]
\caption{Classification of Partitions }

\begin{longtable}{cc>{\raggedright}p{4cm}>{\raggedright}p{6cm}}
\textbf{Partition Class} &  & \textbf{Name} & \textbf{Description}\tabularnewline
$\mathcal{CL}1$ &  & Not Covered & Any partition $\mathcal{P}$ that includes at least one vertex that
is \uline{not} part of a patrolling cycle.\tabularnewline
$\mathcal{CL}2$ &  & Covered, 

Not Balanced & Any partition $\mathcal{P}\notin\mbox{\ensuremath{\mathcal{CL}}1}$
that includes two adjacent regions with size difference greater than
one.\tabularnewline
$\mathcal{CL}3$ &  & Balanced, Unstable & Any partition $\mathcal{P}\notin\mbox{\ensuremath{\left\{  \mathcal{CL}1,\mathcal{CL}2\right\} } }$
where one or more inconsistent regions.\tabularnewline
$\mathcal{CL}4$ &  &  Balanced, Stable (Convergence) & Any partition $\mathcal{P}\notin\mbox{\ensuremath{\left\{  \mathcal{CL}1,\mathcal{CL}2\right\} } }$
where all regions are consistent. \tabularnewline
 &  &  & \tabularnewline
\end{longtable}
\end{table}

By construction of the classifications $\mathcal{CL}1,\mathcal{CL}2,\mathcal{CL}3,\mathcal{CL}4$
it is clear that these are mutually exclusive as well as complete,
hence they divide the set of all possible partitions, so that any
partition $\mathcal{P}$ can belongs to one and only one of the four
defined classes. It also means that any state of the system $X\in S$
can be classified to one and only one of the above classes. $\mathcal{CL}4$
includes the set of partitions that comply with Lemma \ref{lem:C4},
and $\mathcal{CL}3$ includes the partitions of Lemma \ref{cor: C3-1}.
Our goal is to show that from any initial state $X\in S$ the system
will reach some state in $\mathcal{CL}4$.

Figure \ref{fig:system modes}, which we will soon justify, depicts
the possible \textit{transitions} between the classes, i.e. from a
state in one class, exists a transition sequence in the Markov chain
to a state in another class as depicted in the diagram. 

\bigskip{}

\begin{figure}[h]
\makebox[1\columnwidth]{%
\includegraphics[scale=0.4]{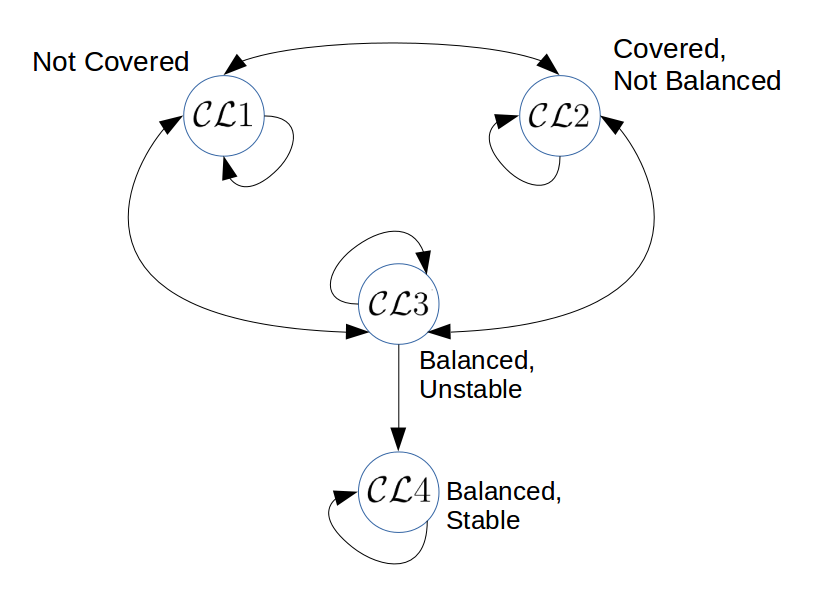}%
}\caption{\label{fig:system modes}Partition Class Transitions Diagram}
\end{figure}

\medskip{}

We first observe that clearly, any state in $\mathcal{CL}1$ or $\mathcal{CL}2$
must be \textit{transient}, the system cannot remain in any one of
them indefinitely. In $\mathcal{CL}1$ the partition includes free
vertices (that are not part of any region). For an agent that persist
in visiting a neighboring vertex, conquest conditions will be eventually
met, and the agent will make repeated attempts to conquer it. Eventually
all free vertices will be conquered and the graph becomes covered.
Similarly, on $\mathcal{CL}2$, when an agent visits a vertex bordering
a larger region there is a chance it will conquer the neighboring
vertex. The conquest could cause the pruning of the spanning tree
and even a balloon explosion of the larger region, and the system
shifts to $\mathcal{CL}1$, or maybe, by chance, it will move to a
balanced partition and end up in $\mathcal{CL}3$.

In $\mathcal{CL}3$ the system is balanced but inconsistent. We know
from Lemma \ref{lem:C4} that it can become consistent, thus shift
to $\mathcal{CL}4$, since there is always a chance that no conquests
or losses will occur in any finite period of time. If a conquest does
occur, it may become unbalanced, and then we are back to $\mathcal{CL}1$
or $\mathcal{CL}2$. So any state in $\mathcal{CL}3$ is transient
and may sink to a state in $\mathcal{CL}4$.

We are left to show that from any state in $\mathcal{CL}1$ and $\mathcal{CL}2$
there exists a path to $\mathcal{CL}3$. A sequence of states that
switches the system back and forth between $\mathcal{CL}1$ and $\mathcal{CL}2$
is the scenario were the system repeatedly evolves to a covered graph
only to retract by events such as a balloon explosion. We will show
now that from any configuration there is a strictly positive probability
to find a balanced partition, hence move to $\mathcal{CL}3$, and
as result there is a strictly positive probability to get into $\mathcal{CL}4$.
This means that $\mathcal{CL}4$, the set of balanced partitions,
are the only partitions that map to recurrent classes. To do that
we recall our restriction to Hamiltonian graphs (or \textit{k-}connected
graphs) and invoke the property of vertex loss.
\begin{lemma}
\label{lem: C1-C2 are transient}Given a system with an environment
graph $\mathcal{G}$ having a Hamiltonian path (or a k-connected graph),
and $n$ agents, in an arbitrary configuration $\mathcal{C}_{0}$,
there is a strictly positive probability for the system to evolve
to a balanced partition. \end{lemma}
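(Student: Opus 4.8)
The plan is to exploit that the state space is a \emph{finite} Markov chain and that every elementary random outcome---a conquest succeeding (probability $\rho_c$) or failing ($1-\rho_c$), a leaf being lost ($\rho_l$) or kept ($1-\rho_l$), and each ordering of the agents inside a strongly asynchronous time slot---carries a probability strictly inside $(0,1)$. Consequently any \emph{prescribed finite sequence} of moves has strictly positive probability, so it suffices to exhibit one such sequence carrying an arbitrary configuration $\mathcal{C}_0$ into a balanced partition (a state of $\mathcal{CL}3$). I would build this sequence in two phases: a \emph{collapse} phase that empties the graph down to $n$ single-vertex regions, and a \emph{regrowth} phase that rebuilds a guaranteed balanced partition.

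For the collapse phase, recall that the vertex an agent loses is the leaf it backtracks from after a failed conquest, that the loss rule zeroes that vertex's marks so it becomes a genuinely free vertex, and that a region's root carries no incoming pair-trail and hence can never be lost. I would drive the collapse by repeatedly invoking this rule: a region that is strictly smaller than a neighbour, or that borders an already-free vertex (free vertices permanently satisfy the conquest conditions, being forever stagnated), may attack, fail, and shed one of its own leaves. Every such loss frees one vertex, so the total number of claimed vertices strictly decreases; hence after finitely many steps the collapse must terminate, either at a covered balanced partition---in which case we are already in $\mathcal{CL}3$ and done---or at the configuration in which every agent holds only its root and all other vertices are free, a configuration in $\mathcal{CL}1$.

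For the regrowth phase I would fix the balanced target before rebuilding. In the $k$-connected case I choose the terminal vertices of the partition guaranteed by \cite{Gyori} to be the $n$ current (singleton) root positions and the part sizes to differ by at most one, obtaining connected balanced parts $V_1,\dots,V_n$ with the root of agent $i$ lying in $V_i$; in the Hamiltonian case I take consecutive blocks along the Hamiltonian path as the parts. Each agent then grows its region by conquering the free vertices of its part one at a time, in an order extending connectedly from its root. Since every target is free and therefore stagnated, each such conquest meets the conquest conditions and succeeds with probability $\rho_c$, while every agent we wish to keep still simply declines its conquest attempts (probability $1-\rho_c$) and, being a singleton, is never forced to lose its root. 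When the last free vertex is claimed the partition is covered and balanced, placing the system in $\mathcal{CL}3$; by Lemma~\ref{lem:C4} it may then settle into the consistent, stable class $\mathcal{CL}4$.

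The \textbf{main obstacle} is the collapse phase together with the placement of the surviving roots. Regrowth is the easy direction precisely because free vertices are unconditionally conquerable, but I must argue that the collapse reaches a clean, \emph{suitably positioned} configuration from \emph{every} start, which requires showing that at each intermediate step some enabling move (a permitted loss, or an auxiliary transferring conquest that pushes a vertex toward the free boundary) genuinely exists. The most delicate instance is the pure path, where regions and parts are forced to be intervals: if the surviving roots cluster together, no balanced connected partition containing them exists, so losses alone do not suffice. I expect to close this gap by interleaving controlled inter-region conquests (``balloon explosions'') during the collapse, which relocate a region's root when a neighbour severs it from its old root, spreading the roots far enough apart to admit a balanced target. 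Isolated branches created along the way pose no additional difficulty, since by Lemmas~\ref{lem:consistent and zero =00003D M states} and~\ref{lem:N+M states} their unbounded decay does not affect the set of reachable states once the double-visit threshold is globally met.
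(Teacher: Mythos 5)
Your overall plan is the same as the paper's: because the state space is finite and every elementary outcome (conquest success or failure, loss or no loss, intra-slot agent ordering) has probability strictly inside $(0,1)$, it suffices to exhibit one finite move sequence from $\mathcal{C}_0$ to a balanced partition, and you build it as collapse-to-singletons followed by regrowth --- exactly the paper's two phases (the paper wraps this in a proof by contradiction, but that is cosmetic). Your collapse phase and its termination argument match the paper's, and your treatment of the $k$-connected case is in fact cleaner than the paper's one-sentence remark: invoking Gyori's theorem \cite{Gyori} with the prescribed vertices taken to be the $n$ surviving roots disposes of the root-placement question entirely in that case.

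However, the gap you flag in the Hamiltonian case is genuine in your formulation and you do not close it. Since your regrowth conquers only free vertices, every agent's final part must contain its original root, and, as your own example shows, on a path with clustered roots there is no balanced partition into consecutive blocks with one root per block; so your two phases, as written, do not compose into a proof for Hamiltonian graphs. The paper does not run into this obstruction because its rebuilding step is not restricted to free targets: a ``conquest launched from one vertex to a neighbor along a given Hamiltonian path'' may also take a vertex of a neighboring region --- including that region's root whenever its patrolling agent is momentarily elsewhere, since only a vertex currently occupied by an agent is immune --- so interval boundaries can be pushed vertex by vertex in either direction along the path and regions effectively migrate, making some balanced chain of blocks reachable from the all-singleton configuration no matter where the roots survived. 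This is precisely the ``controlled inter-region conquest'' repair you sketch at the end, so the missing idea is one you already identified; but to complete the lemma you would have to execute it: verify that such boundary-shifting conquests are enabled under the algorithm's rules (double visit holds against a larger or stagnated neighbor, and the size-difference-exactly-one exclusion can be sidestepped by first growing or shrinking a region by one vertex), and that finitely many of them produce a balanced partition.
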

\begin{proof}
Let us assume the contrary, that there exists a configuration $\mathcal{C}_{0}$
representing a state $s_{0}\in\left\{ \mathcal{CL}1,\mathcal{CL}2\right\} $
such that the probability of any arbitrarily long step sequence starting
at $\mathcal{C}_{0}$ to arrive into a state in $\mathcal{CL}3$ is
strictly zero. We first note that there is a strictly positive probability
that the next change of a region is the loss of a vertex. Namely,
assume that for a sequence of configurations $\left\{ \mathcal{C}_{0},\mathcal{C}_{1},\ldots,\mathcal{C}_{N}\right\} $,
where $N$ is finite, at configurations $\left\{ \mathcal{C}_{0},\mathcal{C}_{1},\ldots,\mathcal{C}_{N-1}\right\} $
there were no changes to any of the regions in the partition of the
environment, and at $\mathcal{C}_{N}$ one or more of the regions
lost one vertex each, following failed conquest attempts. This could
repeat and consequently, there is a strictly positive (though very
small) probability for a sequence of configurations $\left\{ \mathcal{C}_{0},\mathcal{C}_{1},\ldots,\mathcal{C}_{M}\right\} $,
where $M$ is finite, to arrive to a configuration $\mathcal{C}_{M}$
where each of the regions is of size 1. Now, we note that there is
a strictly positive probability that the next change of a region is
a conquest launched from one vertex to a neighbor along a given Hamiltonian
path. This could repeat until the regions form one of the possible
balanced partitions along the Hamiltonian path and the system is now
in $\mathcal{CL}3$. This evolution contradicts our assumption regarding
$\mathcal{C}_{0}$, and we therefore conclude that $\left\{ \mathcal{CL}1,\mathcal{CL}2\right\} $
is a transient set, thus from any arbitrary configuration there is
a strictly positive probability to arrive to a balanced partition.

Note that a similar argument can be made for a \textit{k}-connected
environment graph.\qed

We therefore conclude: \end{proof}
\begin{theorem}
A system with a Hamiltonian or k-connected graph $\mathcal{G}$ with
stationary topology, and n agents implementing AntPaP converges in
finite expected time to a balanced and stable partition with probability
1.\end{theorem}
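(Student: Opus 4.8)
The plan is to complete the argument by recognizing the system's evolution as a finite absorbing Markov chain and invoking standard absorption theory, with the lemmas already established supplying exactly the two ingredients that theory requires: that the balanced, stable partitions are the recurrent classes, and that every other state reaches them with positive probability. We have already shown that the state set $\mathcal{S}$ is finite and that the process $\{X_n\}$ is Markov, so its transition matrix $\mathbf{M}$ is a finite stochastic matrix; what remains is to pin down the structure of its recurrent block $\mathbf{R}$ and then read off convergence.

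First I would identify the recurrent classes precisely with the states in $\mathcal{CL}4$. By Lemma \ref{lem:C4}, once the system remains in a balanced partition for twice the largest cover time all regions become consistent, no conquest or loss is possible, and by Lemma \ref{lem:consistent and zero =00003D M states} the configuration cyclically returns to itself through the $M=\mathrm{lcm}(\Delta t_1,\ldots,\Delta t_k)$ states of a block $\mathbf{C_i}$; such states are therefore recurrent and nothing outside the class is reachable. Conversely I must rule out a recurrent class hiding in $\mathcal{CL}1\cup\mathcal{CL}2\cup\mathcal{CL}3$. Lemma \ref{lem: C1-C2 are transient} gives, from any state in $\mathcal{CL}1$ or $\mathcal{CL}2$, a strictly positive probability of reaching a balanced partition, i.e. a state in $\mathcal{CL}3$; and Lemma \ref{lem:C4}, via the observation that no conquest or loss need occur during a finite window, gives a strictly positive probability of passing from $\mathcal{CL}3$ into $\mathcal{CL}4$. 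Composing these, from every state there is a positive-probability finite path into a $\mathcal{CL}4$ recurrent class, so every state in $\mathcal{CL}1,\mathcal{CL}2,\mathcal{CL}3$ is transient. This fixes the block form of $\mathbf{M}$ and shows that $\mathbf{R}$ lists only balanced, stable partitions.

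Finally I would upgrade ``positive probability to reach $\mathcal{CL}4$'' to ``absorption with probability one in finite expected time.'' Because $\mathcal{S}$ is finite, there exist a uniform integer $L$ and a uniform $p>0$ such that from any transient state the probability of entering the recurrent set within $L$ steps is at least $p$ (take $L$ as the longest and $p$ as the least of the finitely many per-state reaching probabilities, using that absorption is permanent). The number of length-$L$ windows elapsing before absorption is then stochastically dominated by a geometric random variable with success probability $p$, so $\mathcal{CL}4$ is reached with probability $1$ and the expected hitting time is at most $L/p<\infty$. I expect the genuine difficulty to lie not in this concluding bookkeeping but in the content already isolated in Lemma \ref{lem: C1-C2 are transient}: guaranteeing that the route ``shrink every region to a single vertex by repeated vertex losses, then re-expand along the Hamiltonian path (or using $k$-connectivity \`{a} la Gy\H{o}ri)'' is executable with nonzero probability from an \emph{arbitrary} configuration, which is exactly where the Hamiltonian/$k$-connected hypothesis and the vertex-loss rule are indispensable.
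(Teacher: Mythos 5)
Your proposal is correct and follows essentially the same route as the paper: use Lemma \ref{lem: C1-C2 are transient} and Lemma \ref{lem:C4} to show that all states in $\mathcal{CL}1$, $\mathcal{CL}2$, $\mathcal{CL}3$ are transient, identify the recurrent classes of $\mathbf{M}$ with the balanced, stable partitions of $\mathcal{CL}4$, and conclude by finite-chain absorption theory. The only difference is cosmetic: where the paper cites Gallager for geometric convergence into a recurrent state, you spell out the uniform-window/geometric-domination argument explicitly, which is a welcome but equivalent piece of bookkeeping.
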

\begin{proof}
The conclusion from Lemma \ref{lem: C1-C2 are transient}, is that
any state $s\in\left\{ \mathcal{CL}1,\mathcal{CL}2\right\} $ is transient.
We also know that any state $s\in\mathcal{CL}3$ is transient by Lemma
\ref{lem:C4}. This means that all recurrent classes included in the
matrix $\mathbf{R}$ of the stochastic matrix $\mathbf{M}=\left[\begin{array}{cc}
\mathbf{T} & \mathbf{TR}\\
\mathbf{0} & \mathbf{R}
\end{array}\right]$ represent balanced and stable partitions, classified as $\mathcal{CL}4$.
A Markov chain described by a stochastic matrix of this form, will
eventually enter a recurrent state, regardless of the initial state,
and the probability that this takes more than $t$ steps approaches
zero geometrically with $t$ (see, for example, Gallager \cite{Stochastic processes book}).
We conclude that a system with a graph \textit{$\mathcal{G}$ }(of\textit{
}stationary\textit{ }topology) and $n$ agents implementing AntPaP
converges with probability 1 and a finite expected time to a balanced
and stable partition.\qed
\end{proof}

\pagebreak{}

\section{Experimental Results and Discussion}

We presented and thoroughly analyzed the AntPaP algorithm for continuously
patrolling a graph-environment with simple finite state automaton
agents (or bots) using ``pheromone traces''. The simulations presented
so far were on an environment in the shape of a square. On such an
environment, we know that many balanced partitions do indeed exist.
Practical scenarios are seldom so simplistic. In many important cases,
the environment graph is, in fact, uncharted and much more complex
in its structure. Still, agents implementing AntPaP will find a balanced
partition with probability one (almost surely), if such a partition
exists, and will certainly divide their work fairly even when such
partitions do not exist.

The shape of the environment considerably affects the time to convergence.
The number of balanced partitions that the environment graph has is,
naturally, one of the major factors. So is their diversity, i.e. how
different the balanced partitions are from each other. If the balanced
partitions are similar to one another, the dependency of the time
to convergence on the initial locations of the agents tends to be
higher than if the solutions are further apart. 

Consider the system of Figure \ref{fig:T with odd}. The initial positions
of 7 agents are shown in the first snapshot at $t=1$. Next, at $t=896$,
the lower section becomes almost covered. At $t=5995,$ the upper
section is almost covered, and the two agents there clearly have larger
regions than the agents in the lower section. At $t=12993$, the cyan
agent is trapped in the upper section, and we witness a competition
between the agents from the lower section to grow their regions into
the prolonged section, that the cyan agent abandoned.

\begin{figure}[H]
\includegraphics[scale=0.25]{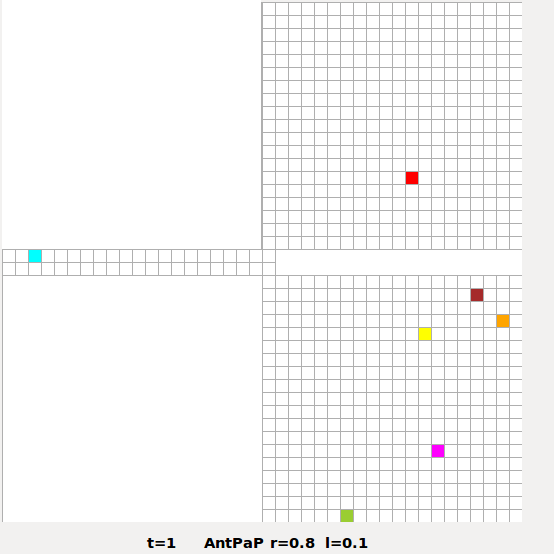}\includegraphics[scale=0.25]{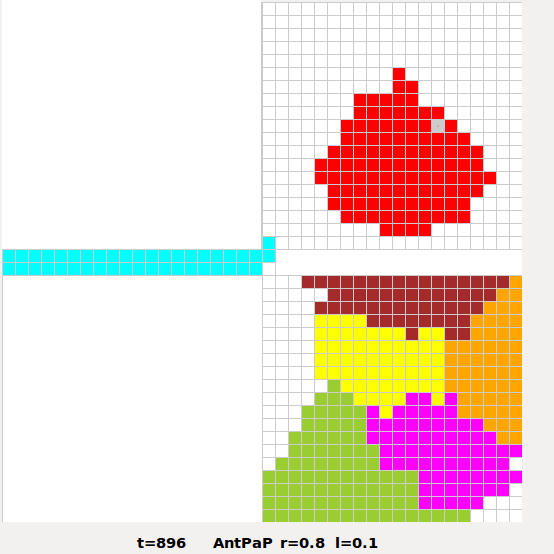}\includegraphics[scale=0.25]{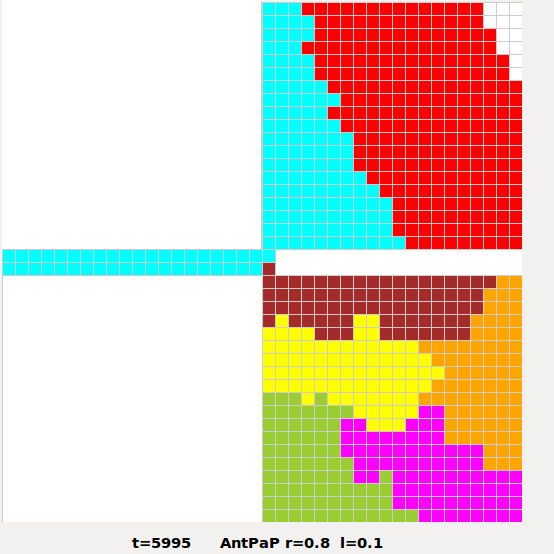}

\bigskip{}

\begin{centering}
\includegraphics[scale=0.25]{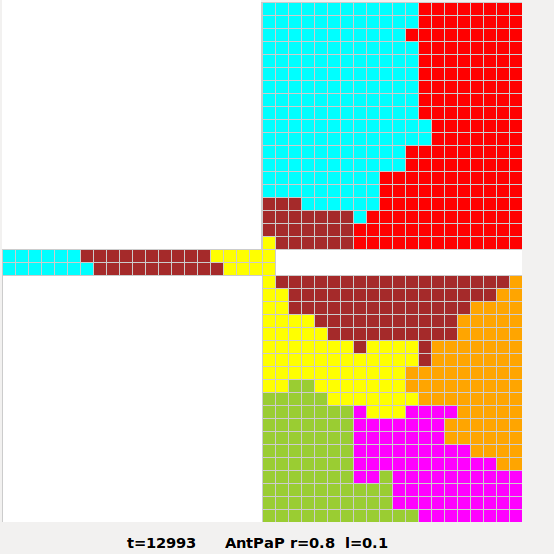}\includegraphics[scale=0.25]{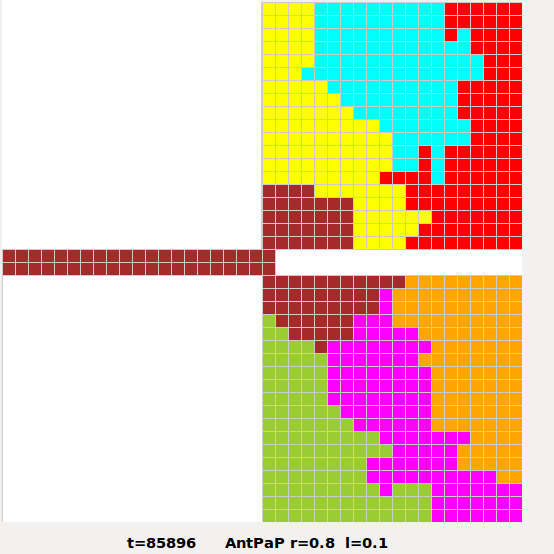}\includegraphics[scale=0.25]{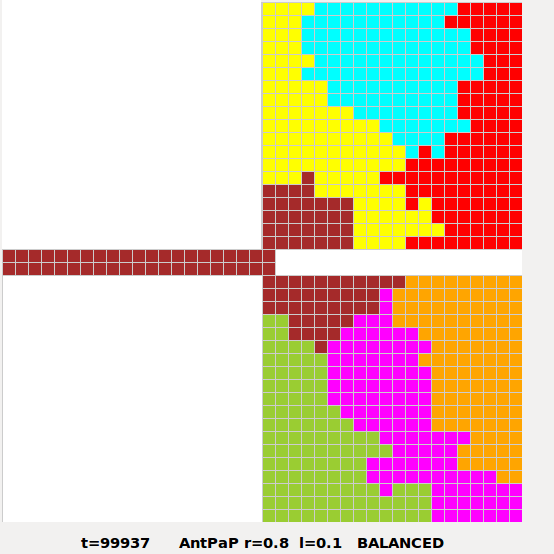}
\par\end{centering}

\caption{\label{fig:T with odd}Evolution of a system with 7 a(ge)nts}
\end{figure}

At $t=85896$, the competition ends after the yellow agent traversed
into the upper section. Now we have 3 agents on each of the upper
and lower sections, and one on the prolonged section. Soon after,
the system reaches a balanced partition. Clearly, there are many balanced
partitions for this system, but all of them have one agent on the
prolonged section, and 3 agents on each of the upper and lower sections.
Initial conditions with 3 agents on the upper and lower sections each
will ensure faster convergence to a balanced partition. Following
this experiment and discussion, it is interesting to consider a system
with the same environment graph and an \uline{even} number of agents.

We simulated a system with the same environment graph and 2 agents.
This system has only one balanced partition, shown in Figure \ref{fig:The-only-balanced}. 

\begin{figure}[H]
\makebox[1\columnwidth]{%
\includegraphics[scale=0.25]{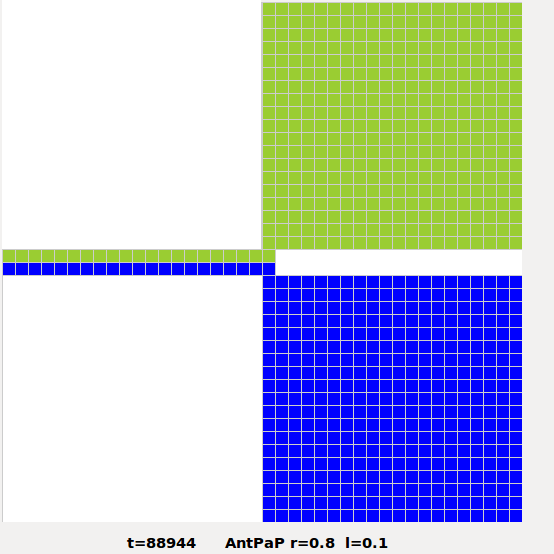}%
}\caption{\label{fig:The-only-balanced}The only balanced partition of a systen
with 2 agents}
\end{figure}

Since only one balanced partition exists, it is reasonable to predict
that the required time for convergence might be substantial. Figure
\ref{fig:T with 2} shows snapshots of an evolution of this system.
Both the violet and yellow agents are initially located in the lower
section. After a while, the violet agent expands its region so that
part of it extends into the upper section. A while later, the violet
region covers almost all of the upper section as well as the prolonged
section. Then, the yellow agent begins to expand into the prolonged
section, eventually causing a ``balloon explosion'' of violet's
region. Soon enough, the violet agent responds, and causes a balloon
explosion of the yellow's region. 

\begin{figure}[H]
\includegraphics[scale=0.25]{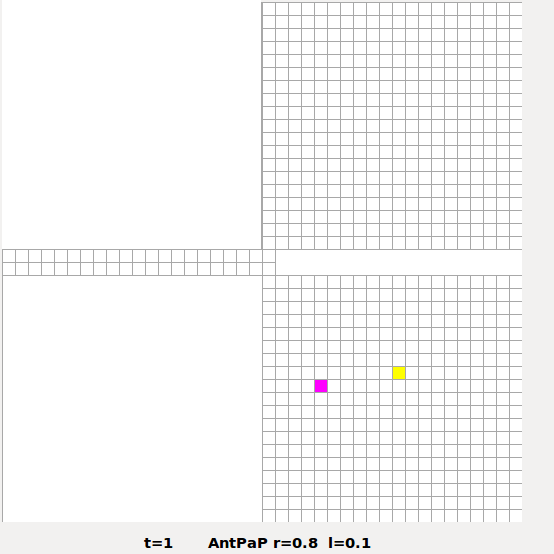}\includegraphics[scale=0.25]{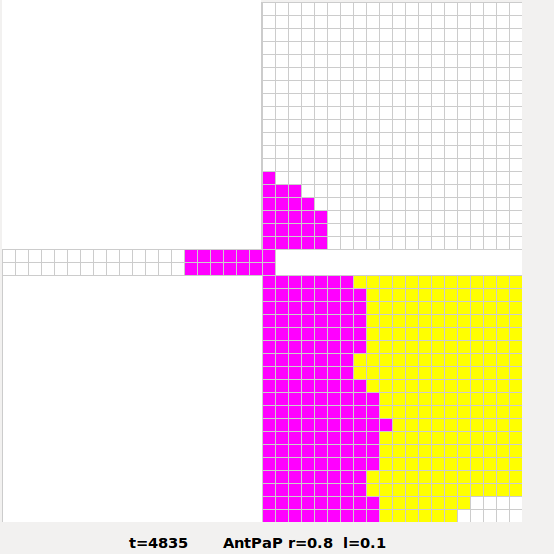}\includegraphics[scale=0.25]{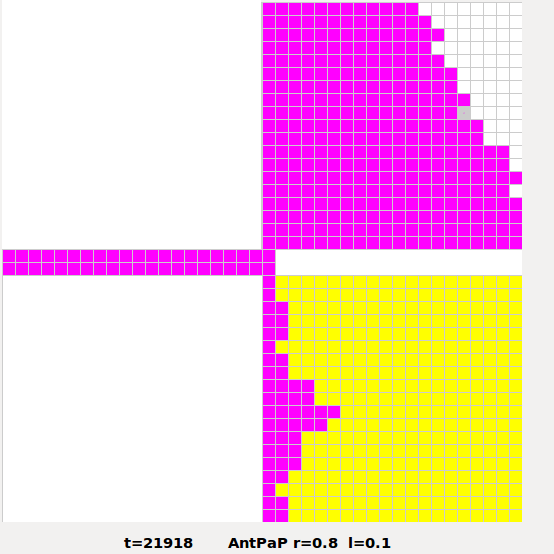}

\bigskip{}

\begin{centering}
\includegraphics[scale=0.25]{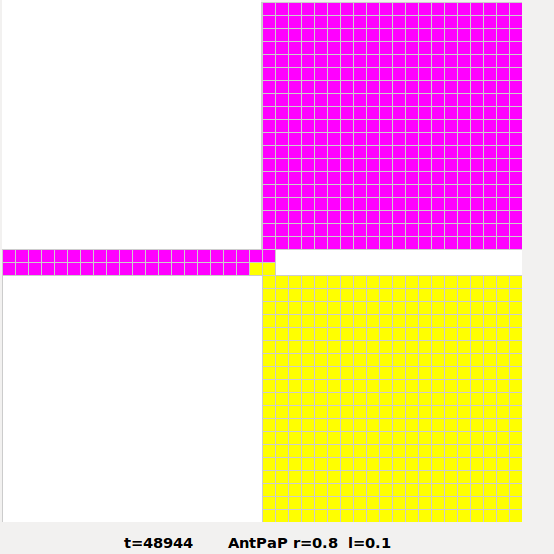}\includegraphics[scale=0.25]{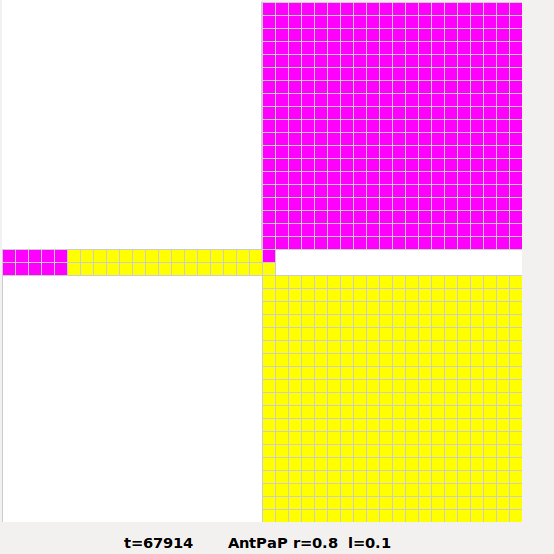}\includegraphics[scale=0.25]{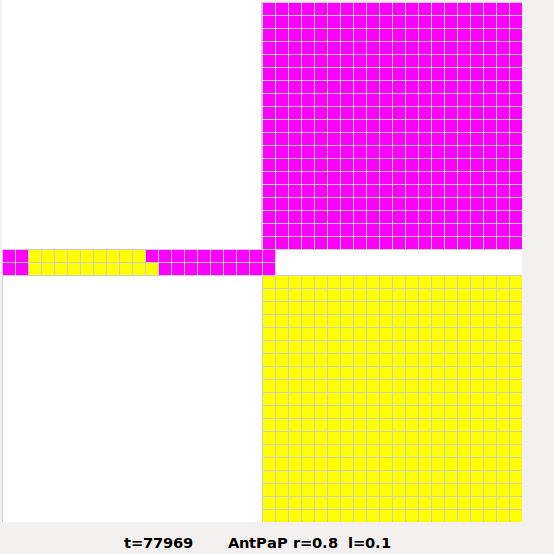}
\par\end{centering}

\caption{\label{fig:T with 2}Evolution of a system with 2 a(ge)nts}
\end{figure}

Due to the shape of the graph, this cycle may repeat over and over
again. It will stop only when the single possible balanced partition
is reached, and subsequently the regions ``lock-in'', and the system
remains stable. For that to happen, an agent must conquer the appropriate
half of the vertices of the prolonged section. We know that this will
eventually happen with probability 1, however the time it will take
can be very very long.

\begin{figure}[H]
\makebox[1\columnwidth]{%
\includegraphics[scale=0.25]{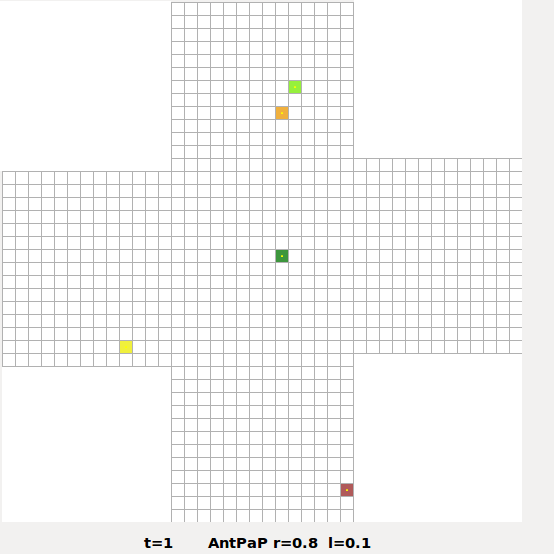}\includegraphics[scale=0.25]{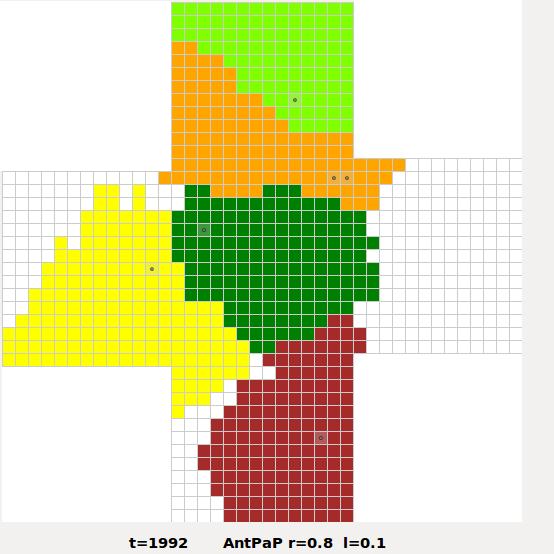}\includegraphics[scale=0.25]{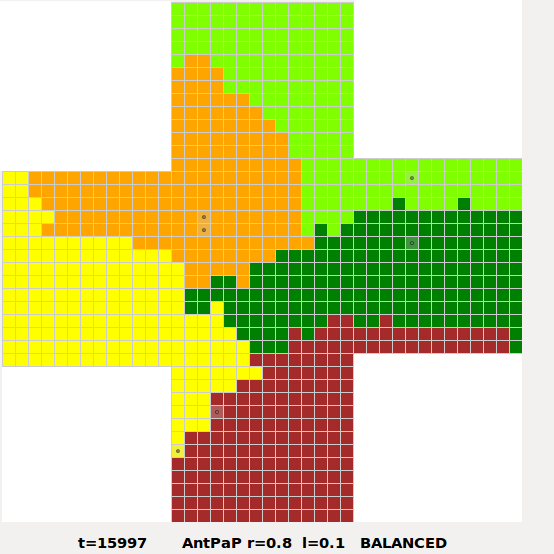}%
}\caption{\label{fig:PLUS w 5 agents}A system with a ``Cross'' graph and
5 agents}
\end{figure}

\begin{figure}[H]
\makebox[1\columnwidth]{%
\includegraphics[scale=0.25]{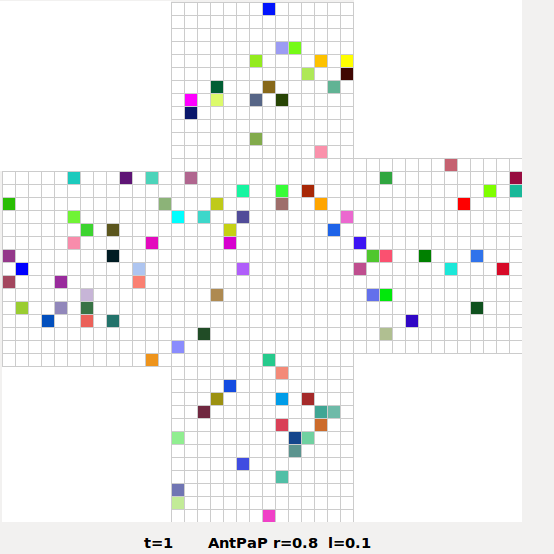}\includegraphics[scale=0.25]{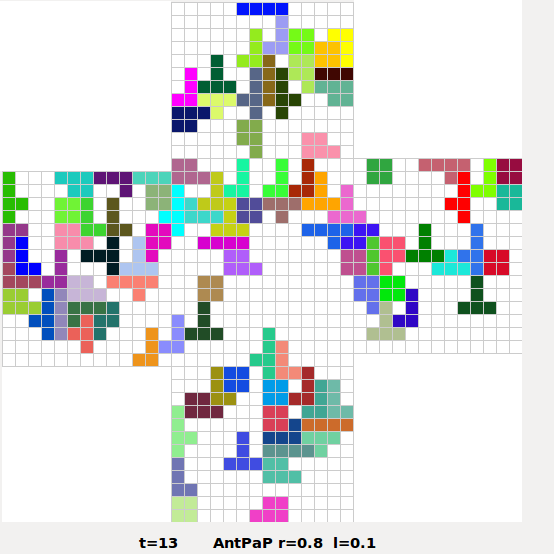}\includegraphics[scale=0.25]{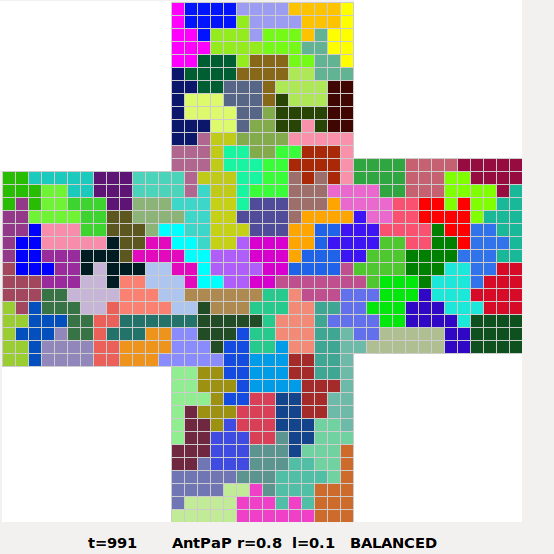}%
}\caption{\label{fig:PLUS w 100 agents}A system with the same ``Cross'' graph
as in Figure \ref{fig:PLUS w 5 agents}, and 100 agents}
\end{figure}

The number of agents is also an important factor of convergence time.
Generally, more agents hasten the convergence. Figure \ref{fig:PLUS w 5 agents}
shows an evolution of a system with 5 agents on a different environment.
We shall call this environment graph the ``Cross''. Figure \ref{fig:PLUS w 100 agents}
shows a system with the same ``Cross'' graph and 100 agents. Here,
the ``pressure'' that an agent ``feels'' from other ``balloons''
quickly accumulates around its region, and the convergence is swift.
Figure \ref{fig:PLUS agents vs. convergence} depicts results of multiple
simulation runs, of systems with the ``Cross'' graph of figure \ref{fig:PLUS w 5 agents},
exhibiting convergence time as a function of the number of agents.

\begin{figure}[H]
\makebox[1\columnwidth]{%
\includegraphics[scale=0.4]{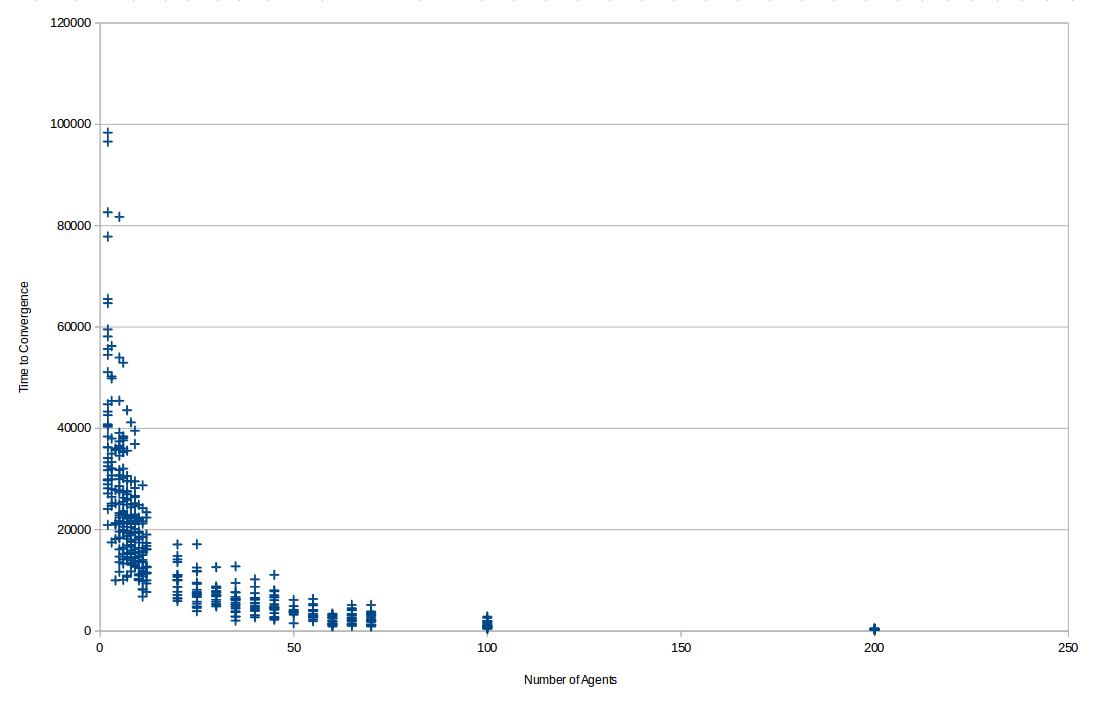}%
}\caption{\label{fig:PLUS agents vs. convergence}Convergence time of a system
with a ``Cross'' graph as function of the number of agents }
\end{figure}

In some systems, particular numbers of agents may cause a substantially
larger time to convergence. In Figure \ref{fig:ROOMS balanced} we
present a balanced partition in a graph environment that we call ``6
Rooms''. 

\begin{figure}[H]
\makebox[1\columnwidth]{%
\includegraphics[scale=0.3]{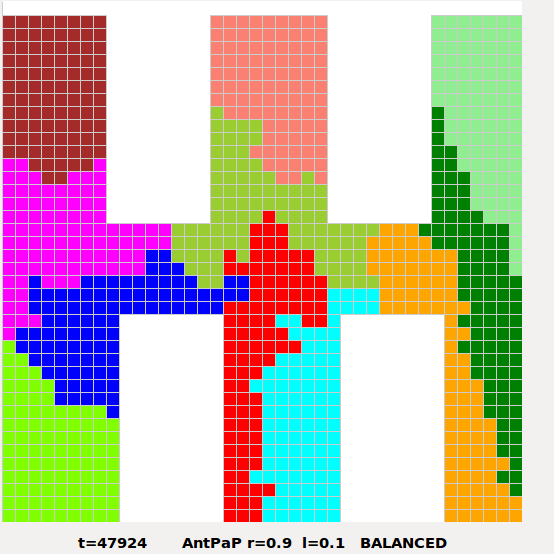}%
}\caption{\label{fig:ROOMS balanced}Balanced partition reached on a ``6 Rooms''
graph }
\end{figure}

Systems with a ``6 Rooms'' graph and 6 agents sometimes require
a substantially longer convergence time, as shown in Figure \ref{fig:ROOMS agents vs. convergence}.
Ignoring the outliers at 6 agents, Figure \ref{fig:ROOMS agents vs. convergence No outliers}
shows that the chart exhibiting convergence time as a function of
the number of agents is similar in shape to the one we have seen for
the ``Cross'' graph, in Figure \ref{fig:PLUS agents vs. convergence}.

\begin{figure}[H]
\makebox[1\columnwidth]{%
\includegraphics[scale=0.5]{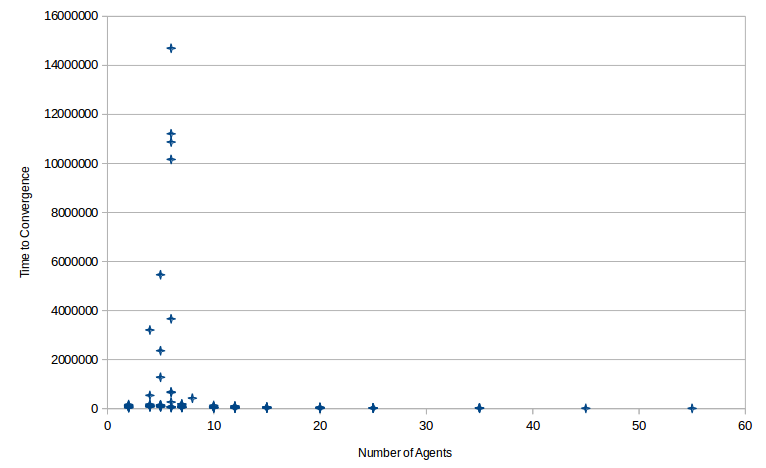}%
}\caption{\label{fig:ROOMS agents vs. convergence}Convergence time of a system
with a ``6 Rooms'' graph as function of the number of agents}
\end{figure}
\begin{figure}[H]
\makebox[1\columnwidth]{%
\includegraphics[scale=0.5]{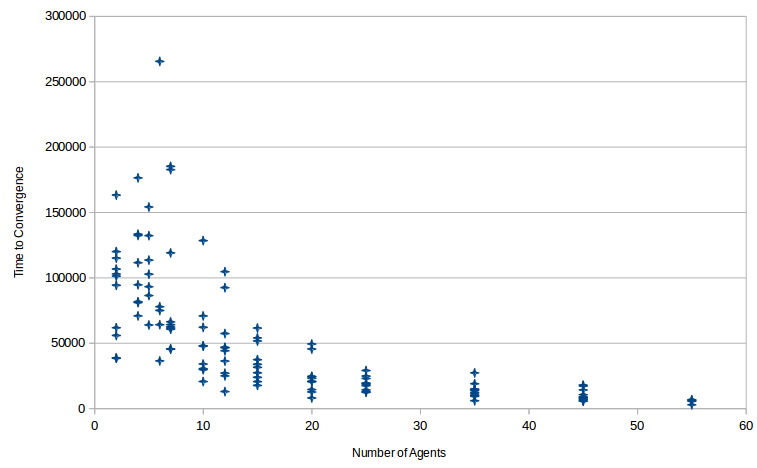}%
}\caption{\label{fig:ROOMS agents vs. convergence No outliers}Zoom-In on the
chart of Figure \ref{fig:ROOMS agents vs. convergence} }
\end{figure}

\begin{figure}[H]
\makebox[1\columnwidth]{%
\includegraphics[scale=0.4]{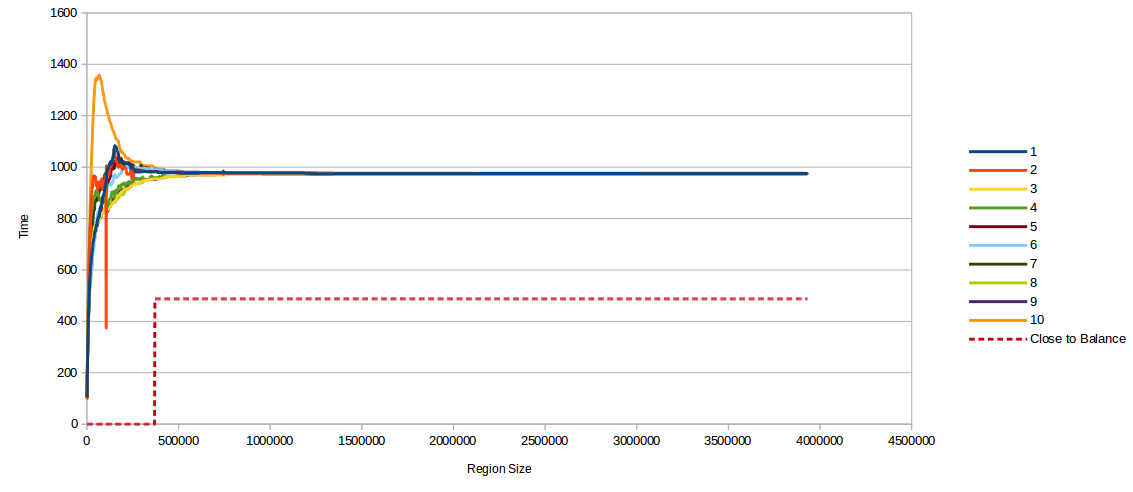}%
}\caption{\label{fig:ROOMS region size evoultion}An evolution of a system of
10 agents and a ``6 Rooms'' graph of about 10,000 vertices. Convergence
was achieved at approximately $t=4,000,000$. But the system became
``close to balanced'' rather quickly.\protect \\
}
\end{figure}

\begin{figure}[H]
\makebox[1\columnwidth]{%
\includegraphics[scale=0.3]{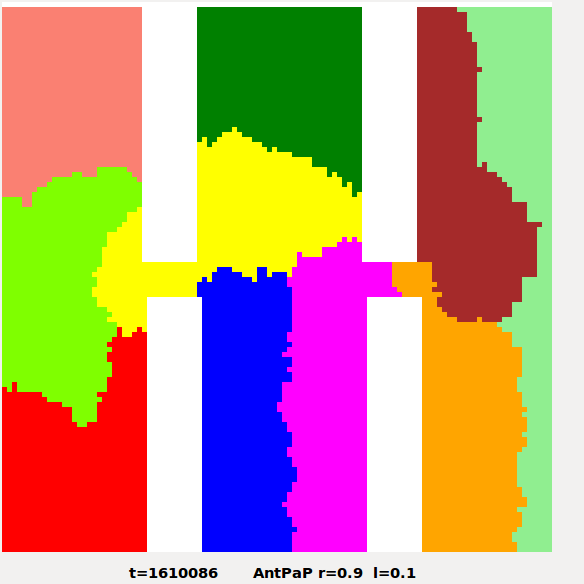}\includegraphics[scale=0.3]{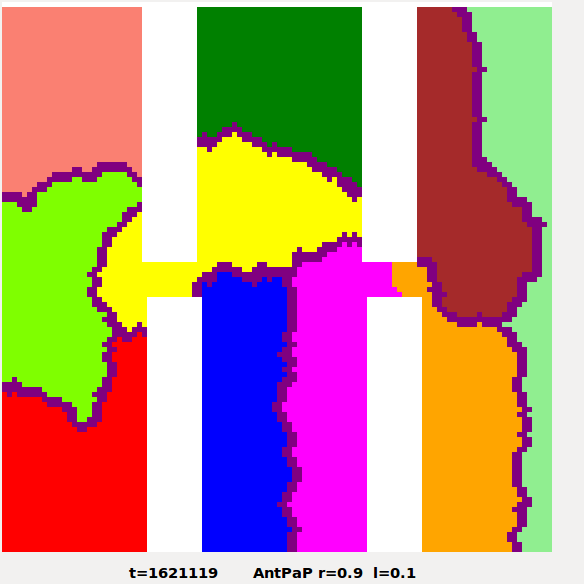}%
}\caption{\label{fig:ROOMS balanced with contours}An evolution that reached
``close to balanced'' quickly, and remain so for long. \protect \\
}
\end{figure}
In the simulations described above, we tested the evolution of the
multi-agent patrolling process until convergence to a stable and balanced
partition. However remarkably, the system evolves rather quickly to
close-to-balanced partitions due to the ``balloon'' forces implicitly
driving the agents' behavior. Therefore, for practical purposes we
see that the AntPaP algorithm balances the work of the agents much
earlier than its convergence time, and the partitioning becomes reasonably
good rather quickly. This property is crucial in case of time varying
topologies. Hence, AntPaP is a versatile and adaptive process. Considering
again the ``6 Rooms'' example with 10 agents, we see in Figure \ref{fig:ROOMS region size evoultion}
a temporal evolution of AntPaP until a stable and balanced partition
is achieved. As is clear on the chart displaying the time evolution
of the sizes of the 10 regions, the system reached convergence at
approximately $t=4,000,000$ steps. However it is also clear that
after approximately $400,000$ steps, the difference between the largest
and smallest regions in the partition of the environment graph is
already insignificant. In the chart, a system is defined as ``close
to balanced'' when more than 99\% of the graph is covered, and the
difference between the largest and smallest regions is less than 5\%
the ideally balanced size (i.e. the graph size divided by the number
of agents). Figure \ref{fig:ROOMS balanced with contours} exhibits
a partition reached when the system was ``close to balanced''. Both
snapshots show the same partition (at two different times). The snapshot
at the right also shows the borders between regions that ``reached
balance'' (i.e. their size difference is at most 1 vertex) depicted
in purple. There is only one border which is not yet balanced, between
the magenta region and the dark yellow region, located in the right
``corridor''. These regions are close in their sizes, and as a result
the double visit condition does not occur very often. Despite the
partition not being balanced yet, the division of work between agents
is already fair, hence for practical purposes, a ``close to balanced''
condition is good enough.

We note in summary that AntPaP does not produce \textit{k-cut }partitions
\cite{k-cut}, and generally assumes that there are no constraints
on the grouping of vertices. Some important real-world problems impose
such constraints, for example, the allocation of users in a social
network to hosting servers, according to their interconnections. Other
real-world problems, however, do not impose such constrains, for example,
the division of work patrolling the world-wide-web for content analysis
and classification. In view of the good properties discussed above,
we envision that AntPaP could become a building block for distributed
algorithms aiming to fairly divide between agents the labor of patrolling
an environment, using very simple agents constrained to local interactions
based on tiny ``pheromone'' marks left in the environment. 

\newpage{}

\pagebreak{}

\pagebreak{}


\begin{thebibliography}{10}
\bibitem{BDFS} Y. Elor, A. M. Bruckstein, \textquotedbl{}Multi-a(ge)nt
graph patrolling and partitioning\textquotedbl{}. Proceedings of the
2009 IEEE/WIC/ACM International Joint Conference on Web Intelligence
and Intelligent Agent Technology-Volume 02. IEEE Computer Society,
2009.

\bibitem{MLDFS}I.A. Wagner, M. Lindenbaum, A. M. Bruckstein, ``Efficiently
searching a graph by a smell-oriented vertex process''. Annals of
Mathematics and Artificial Intelligence, 24(1-4):211\textendash{}223,
1998.

\bibitem{Even}S. Even, ``Graph Algorithms''. Rockville, MD: Comput.
Sci. Press, 1979

\bibitem{Clustering Survey}P. Berkhin, \textquotedbl{}A survey of
clustering data mining techniques.\textquotedbl{} Grouping multidimensional
data. Springer Berlin Heidelberg, pp. 25-71, 2006.

\bibitem{VLSI}C. J. Alpert, A. B. Kahng, ``Recent directions in
netlist partitioning: a survey'', Integration, the VLSI Journal ,
19(1\textendash{}2):1-81,1995

\bibitem{Load Balancing}K. Schloegel, G. Karypis, V. Kumar, ``Parallel
static and dynamic multi-constraint graph partitioning'', Concurrency
and Computation: Practice and Experience, 14 (3):219\textendash{}240,2002 

\bibitem{k-cut}R. G. Downey, V. Estivill-Castro, M. R. Fellows, E.
Prieto, F. A. Rosamond, ``Cutting Up Is Hard To Do: the Parameterized
Complexity of k -Cut and Related Problems'', Electronic Notes in
Theoretical Computer Science, 78():209\textendash{}222,2003

\bibitem{DFS Tarjan Spanning Tree}R. Tarjan, ``Depth-First Search
and Linear Graph Algorithms'', SIAM, 1(2):146-160, 1972

\bibitem{bee colony}J. D. McCaffrey. ``Graph Partitioning using
a Simulated Bee Colony Algorithm'', IEEE Conference on Information
Reuse and Integration (IRI) Las Vegas, USA, pp. 400-405, 2011

\bibitem{aNTS}F. Comellas, E. Sapena, ``A multiagent algorithm for
graph partitioning''. EvoWorkshops, Lecture Notes in Computer Science
3907:279\textendash{}285, 2006

\bibitem{multi-agent patrolling strategies}Y. Chevaleyre, F. Sempe,
G. Ramalho, \textquotedbl{}A theoretical analysis of multi-agent patrolling
strategies.\textquotedbl{} Proceedings of the Third International
Joint Conference on Autonomous Agents and Multiagent Systems-Volume
3. IEEE Computer Society, pp. 1524-1525, 2004.

\bibitem{ACO applied to parolling}F. Lauri, F. Charpillet, \textquotedbl{}Ant
colony optimization applied to the multi-agent patrolling problem.\textquotedbl{}
IEEE Swarm Intelligence Symposium, 2006.

\bibitem{ACO Ant System}M. Dorigo, V. M. Maniezzo, A. Colorni, \textquotedbl{}Ant
system: optimization by a colony of cooperating agents.\textquotedbl{}
IEEE Transactions on Systems, Man, and Cybernetics, Part B (Cybernetics)
26(1): 29-41, 1996.

\bibitem{Gyori}E. Gyori, \textquotedbl{}On division of graphs to
connected subgraphs.\textquotedbl{} Combinatorics (Proc. Fifth Hungarian
Colloq., Keszthely). Vol. 1, 1976.

\bibitem{Stochastic processes book}R. G. Gallager, Stochastic processes:
theory for applications. Cambridge University Press, 2013.

\bibitem{Nerode}A. Nerode, \textquotedbl{}Linear automaton transformations.\textquotedbl{}
Proceedings of the American Mathematical Society 9(4): 541-544, 1958\end{thebibliography}
\end{document}